\newtheorem{lemma}{Lemma}
\newtheorem{theorem}{Theorem}
\newtheorem{definition}{Definition}
\newtheorem{corollary}{Corollary}
\newtheorem{example}{Example}
\begin{document}

\title{When Social Influence Meets Item Inference}
\author{Hui-Ju Hung$^\star$, Hong-Han Shuai$^\dagger$, De-Nian Yang$%
	^\dagger$, Liang-Hao Huang$^\dagger$,\\ Wang-Chien Lee$^\star$, Jian Pei$^\mathsection$, Ming-Syan Chen$^\ddagger$ \\
	\affaddr{$^\star$The Pennsylvania State University, State College,
		Pennsylvania, USA} \\
	\affaddr{$^\dagger$Academia Sinica, Taipei, Taiwan} \\
	\affaddr{$^\mathsection$Simon Fraser University, Burnaby, Canada} \\
	\affaddr{$^\ddagger$National Taiwan University, Taipei, Taiwan} \\
}
\maketitle

\begin{abstract}
Research issues and data mining techniques for product recommendation and viral marketing have been widely studied. Existing works on seed selection in social networks do not take into account the effect of product recommendations in e-commerce stores. In this paper, we investigate the seed selection problem for viral marketing that considers both effects of social influence and item inference (for product recommendation). We develop a new model, \emph{Social Item Graph (SIG)}, that captures both effects in form of hyperedges. Accordingly, we formulate a seed selection problem, called \emph{Social Item Maximization Problem (SIMP)}, and prove the hardness of SIMP. We design an efficient algorithm with performance guarantee, called Hyperedge-Aware Greedy (HAG), for SIMP and develop a new index structure, called SIG-index, to accelerate the computation of diffusion process in HAG. Moreover, to construct realistic SIG models for SIMP, we develop a statistical inference based framework to learn the weights of hyperedges from data. Finally, we perform a comprehensive evaluation on our proposals with various baselines. Experimental result validates our ideas and demonstrates the effectiveness and efficiency of the proposed model and algorithms over baselines.

\end{abstract}

\section{Introduction}

\label{sec:_intro}


The ripple effect of social influence \cite{Bond12Nature} has been explored for
viral marketing via online social networks. Indeed,
studies show that customers tend to receive product information from
friends better than advertisements on traditional media \cite{Nail04}.
To explore the potential impact of social influence, many research studies on \emph{seed selection}, i.e., selecting a given
number of influential customers to maximize the spread of social
recommendation for a product, have been reported~\cite%
{Chen10KDD, Kempe03KDD}.\footnote{%
	All the top 5 online retailers, including Amazon, Staples, Apple, Walmart,
	and Dell, are equipped with sophisticated recommendation engines. They also
	support viral marketing by allowing users to share favorite products in
	Facebook.} 
However, these works do not take into account the effect of product recommendations in online e-commerce stores.
We argue that when a customer buys an item due to the social influence (e.g., via Facebook or Pinterest), there is a potential side effect due to the \textit{item inference} recommendations from stores.\footnote{%
	In this paper, we refer product/item recommendation based on
	associations among items inferred from purchase transactions as item inference recommendation.}
For example, when Alice buys a DVD of ``Star War'' due
to the recommendation from friends, she may also pick up the
original novel of the movie due to an in-store recommendation, which may in turn
trigger additional purchases of the novel among her friends. To the
best of our knowledge, this additional spread introduced by
the item inference recommendations has not been considered in
existing research on viral marketing.

\begin{figure}[t]
	\centering
	\includegraphics[width=3 in]{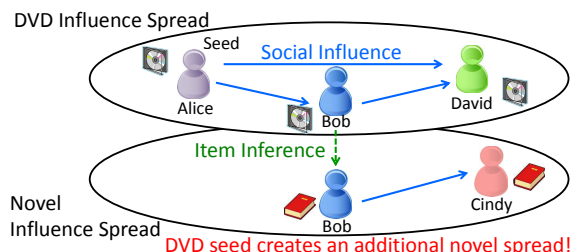} 
	\vspace{-10pt}
	\caption{A motivating example}
	\label{fig:motivation_example}
	\vspace{-2mm}
\end{figure}

Figure~\ref{fig:motivation_example} illustrates the above joint effects in a toy example
with two products and four customers, where a dash
arrow represents the association rule behind the item inference
recommendation, and a solid arrow denotes the social influence between
two friends upon a product. In the two separate
planes corresponding to DVD and novel, social influence is expected to take effect on promoting interests in (and
potential purchases of) the DVD and novel, respectively.
Meanwhile, the item inference recommendation by the e-commerce store is expected to trigger sales of additional
items. Note that the association rules behind item inference are derived without considering the ripple effect of social
influence. In the example, when
Bob buys the DVD, he may also buy the novel due to the item inference
recommendation. Moreover, he may influence Cindy to purchase
novel. However, the association rules behind item inference are derived without considering the ripple effect of social influence. On the other hand, to promote the movie DVD, Alice may be selected as a seed for a viral
marketing campaign, hoping to spread her influence to Bob and David to
trigger additional purchases of the DVD. Actually, due to the effect of item
inference recommendation, having Alice as a seed may additionally trigger
purchases of the novel by Bob and Cindy. This is a factor that existing seed
selection algorithms for viral marketing do not account for.

We argue that to select seeds for maximizing the spread of product information to a customer base (or maximizing the sale revenue of
products) in a viral marketing campaign, both effects of item inference
and social influence need to be considered. To incorporate both effects, we propose a new model, called \emph{Social Item Graph (SIG)} in form of hyperedges, for capturing ``purchase actions'' of customers on products and their potential influence to trigger other purchase actions. Different from the conventional approaches \cite{Chen10KDD,Kempe03KDD} that use links between customers to model social relationship (for viral marketing) and links between items to capture
the association (for item inference recommendation), SIG represents a
\emph{purchase action} as a node (denoted by a tuple of a customer and an
item), while using hyperedges among nodes to capture the influence spread
process used to predict customers' future purchases. Unlike the
previous influence propagation models \cite{Chen10KDD, Kempe03KDD}
consisting of only one kind of edges connecting two customers (in social
influence), the hyperedges in our model span across tuples of different
customers and items, capturing both effects of social influence and item
inference.

Based on SIG, we formulate the \emph{Social Item Maximization Problem} (SIMP) to find a seed set, which consists of selected products along with targeted customers, to maximize the total adoptions of products by customers. Note that SIMP takes multiple products into consideration and targets on maximizing the number of products
purchased by customers.\footnote{SIMP can be extended to a weighted version with different profits from each product. In this paper, we focus on maximizing the total sales.} SIMP is a very challenging problem, which does not have the submodularity property.
We prove that SIMP cannot be approximated within $n^{c}$ with any $c<1$, where $n$ is the number of nodes in SIMP,
i.e., SIMP is extremely difficult to approximate with a small ratio because the
best approximation ratio is almost $n$.\footnote{While there is no good solution quality guarantee for the worst case scenario, we empirically show that the algorithm we developed achieves total adoptions on average comparable to optimal results. }

To tackle SIMP, two challenges arise: 1) numerous combinations of
possible seed nodes, and 2) expensive on-line computation of influence
diffusion upon hyperedges. To address the first issue, we first introduce the \textit{Hyperedge-Aware Greedy (HAG)} algorithm, based on a unique property of hyperedges, i.e., a hyperedge requires all its source nodes to be activated in order to trigger the purchase action in its destination node. HAG selects multiple seeds  in each seed selection iteration to further activate more nodes via hyperedges.\footnote{A hyperedge requires all its source nodes to be activated to diffuse its influence to its destination node.}
To address the second issue, we exploit the structure of Frequent Pattern Tree (FP-tree) to develop \emph{SIG-index} as an compact representation of SIG in order to accelerate the computation of activation probabilities of nodes in online diffusion.

Moreover, to construct realistic SIG models for SIMP, we also develop a statistical inference based framework to learn the weights of hyperedges from logs of purchase actions. Identifying the hyperedges and estimating the corresponding weights are major challenges for constructing of a SIG due to data sparsity and unobservable activations.
To address these issues, we propose a novel framework that employs smoothed expectation and maximization algorithm (EMS) \cite{Silverman90}, to identify hyperedges and estimate their values by kernel smoothing.

Our contributions of this paper are summarized as follows.
\begin{itemize}
\vspace{-2pt}
\item We observe the deficiencies in existing techniques for item inference recommendation and seed selection and propose the \textit{Social Item Graph (SIG)} that captures both effects of social influence and item inference in prediction of potential purchase actions.
\vspace{-2pt}
\item Based on SIG, we formulate a new problem, called \textit{Social Item Maximization Problem (SIMP)}, to select the seed nodes for viral marketing that effectively facilitates the recommendations from both friends and stores simultaneously. In addition, we analyze the hardness of SIMP.
\vspace{-2pt}
\item We design an efficient algorithm with performance guarantee, called Hyperedge-Aware Greedy (HAG), and develop a new index structure, called SIG-index, to accelerate the computation of diffusion process in HAG.
\vspace{-12pt}
\item To construct realistic SIG models for SIMP, we develop a statistical inference based framework to learn the weights of hyperedges from data.
\vspace{-2pt}
\item We conduct a comprehensive evaluation on our proposals with various baselines. Experimental result validates our ideas and demonstrates the effectiveness and efficiency of the proposed model and algorithms over baselines.
\end{itemize}


The rest of this paper is organized as follows. Section \ref{sec:relatedwork} reviews the related work.
Section \ref{sec:social item graph} details the SIG model and its influence diffusion process.
Section \ref{sec:social_item_maximization} formulates SIMP and designs new algorithms to efficiently solve the problem.
Section \ref{sec:IdenHyperedge} describes our approach to construct the SIG.
Section~\ref{sec:experiment} reports our experiment results and Section~\ref{sec:conclusion}
concludes the paper.

\section{Related Work}

\label{sec:relatedwork}

To discover the associations among purchased items, frequent pattern mining
algorithms find items which frequently appear together in transactions~\cite{Agrawal93}.
Some variants, such as closed
frequent patterns mining \cite{Pasquier99}, maximal frequent pattern mining
\cite{Bayardo98}, 
have been studied. However, those existing works, focusing on unveiling the
common shopping behaviors of individuals, disregard the social influence
between customers~\cite{Wen10}. On the other hand, it has been pointed out that items recommended by item inference may have been introduced to users by  social diffusion~\cite{CIKM15}.
In this work, we develop a new model and a learning framework that consider both the social influence and item inference factors jointly to derive the association among
purchase actions of customers. In addition, we focus on seed selection for prevalent viral marketing by incorporating the effect of item inference.

With a great potential in business applications, social influence diffusion
in social networks has attracted extensive interests recently~\cite%
{Chen10KDD, Kempe03KDD}. Learning algorithms for estimating the social
influence strength between social customers have been developed~\cite%
{Goyal10WSDM, Saito08KES}. Based on models of social influence diffusion,
identifying the most influential customers (seed selection) is a widely studied problem~\cite%
{Chen10KDD, Kempe03KDD}. Precisely, those studies aim to find the best $k$
initial seed customers to target on in order to maximize the population of
potential customers who may adopt the new product. This seed selection
problem has been proved as NP-hard~\cite{Kempe03KDD}. Based on two influence
diffusion models, Independent Cascade (IC) and Linear Threshold (LT), Kempe
et al. propose a $1-1/e$ approximation greedy algorithm by exploring the
submodularity property under IC and LT~\cite{Kempe03KDD}. 
Some follow-up studies focus on improving the efficiency of the greedy algorithm using various spread estimation methods, e.g., MIA\cite{Chen10KDD} and TIM+\cite{Tang14SIGMOD}. However, without considering the existence of item inference, those algorithms are not applicable to SIMP. Besides the IC and LT model, Markov random field has been used to model social influence and calculate expected profits from viral marketing \cite{Domingos01}. Recently, Tang et al. proposed a Markov model based on ``confluence'', which estimates the total influence by combining different sources of conformity~\cite{Tang13}. However, these studies only consider the diffusion of a \emph{single} item in business applications. Instead, we incorporate item
inference in spread maximization to estimate the influence more accurately.

\section{Social Item Graph Model}

\label{sec:social item graph}

Here we first present the social item graph model and then
introduce the diffusion process in the proposed model.

\subsection{Social Item Graph\label{sec:SIG}}

We aim to model user purchases and potential activations of new purchase actions from some prior. We first
define the notions of the social network and purchase actions.

\vspace{-5pt}
\begin{definition}
	A \textit{social network} is denoted by a directed graph $G=\left(
	V,E\right) $ where $V$ contains all the nodes and $E$ contains all the
	directed edges in the graph. Accordingly, a social network is also referred
	to as a \emph{social graph}.
\end{definition}

\vspace{-7pt}
\begin{definition}
	Given a list of commodity items $I$ and a set of customers $V$, a \textit{%
		purchase action}  (or \textit{purchase} for short), denoted by $(v,i)$ where $v\in V$ is a customer, and $i \in I$ is an item,
	refers to the purchase of item $i$ by customer $v$.
\end{definition}

\vspace{-7pt}
\begin{definition}
	An \textit{purchase log} is a database consisting of all the purchase actions
	in a given period of time.
\end{definition}
\vspace{-3pt}

\begin{figure}[t]
	\centering
	\includegraphics[width=1.8 in]{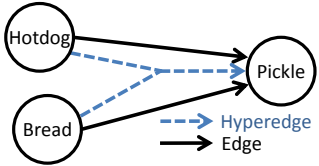} 
	\vspace{-10pt}
	\caption{A hyperedge example}
	\label{fig:hyperedge_example}
	\vspace{-2mm}
\end{figure}

\sloppy
Association-rule mining (called item inference in this paper) has been widely exploited to discover correlations between purchases in transactions. For example, the rule $\{\text{hotdog},\text{bread}\}\rightarrow \{\text{pickle}\}$ obtained from the transactions of a supermarket indicates that if a customer buys hotdogs and bread together, she is likely to buy pickles. To model the above likelihood, the confidence \cite{DataMiningTextbook11} of a rule $\{\text{hotdog},\text{bread}\}\rightarrow \{\text{pickle}\}$ is the proportion of the transactions that have hotdogs and bread also include pickles. It has been regarded as the conditional probability that a customer buying \textit{both} hotdogs \textit{and} bread would trigger the additional purchase of pickles. To model the above rule in a graph, a possible way is to use two separate  \textit{edges} (see Figure~\ref{fig:hyperedge_example}; one from hotdog to pickle, and the other from bread to pickle, respectively), while the probability associated with each of these edges is the confidence of the rule. In the above graph model, however, \textit{either one} of the hotdog or bread may trigger the purchase of pickle. This does not accurately express the intended condition of purchasing  \textit{both} the hotdog \textit{and} bread. By contrast, the \textit{hyperedges} in Graph Theory, by spanning multiple source nodes and one destination node, can model the above association rule (as illustrated in Figure \ref{fig:hyperedge_example}). The probability associated with the hyperedge represents the likelihood of the purchase action denoted by the destination node when \textit{all} purchase actions denoted by source nodes have happened.

On the other hand, in viral marketing, the traditional IC model activates a new node by the social influence probabilities associated with edges to the node. Aiming to capture both effects of item inference and social influence. We propose a new \textit{Social Item Graph} (SIG). SIG models the likelihood for a purchase (or a set of purchases) to trigger another purchase in form of \textit{hyperedges}, which may have one or multiple source nodes leading to one destination node. We define a social item graph as follows.


\begin{definition}
	Given a social graph of customers $G=(V, E)$ and a commodity item list $I$, a  \textit{social item graph} is denoted by $G_{SI}=\left(
	V_{SI},E_{H}\right) $, where $V_{SI}$ is a set of purchase actions
	and $E_{H}$ is a set of hyperedges over $V_{SI}$. A node $n \in V_SI$ is denoted as $(v,i)$, where $v\in V$ and $i\in I$.
	A hyperedge $e\in E_{H}$ is of the following form:
	\small
	\begin{eqnarray*}
		&&\{(u_{1},i_{1}),(u_{2},i_{2}),\cdots ,(u_{m},i_{m})\}\rightarrow (v,i)
	\end{eqnarray*}
	\noindent where $u_{i}$ is in the neighborhood of $v$ in $G$, i.e., $%
	u_{i}\in N_{G}\left( v\right) =\left\{ u|d(u,v)\leq 1\right\}$.\footnote{Notice that when $u_{1}=u_{2}=\cdots=u_{m}=v$, the hyperedge represents the item inference of item $i$. On the other hand, when $i_{1}=i_{2}=\cdots=i_{m}=i$, it becomes the social influence of user $u$ on $v$.}
\end{definition}
\vspace{-3pt}
Note that the conventional social influence edge in a social graph with one source and one destination can still be modeled in an SIG as a simple edge associated with a corresponding influence probability. Nevertheless, the influence probability from a person to another can vary for different items (e.g., a person's influence on another person for cosmetics and smartphones may vary.). Moreover, although an SIG may model the purchases more accurately with the help of \textit{both} social influence and item inference, the complexity of processing an SIG with hyperedges is much higher than simple edges in the traditional social graph that denotes only social influence.\footnote{To solve this issue, one approach is to transform a SIG with hyperedges to a graph without hyperedges, by replacing a hyperedge with multiple simple edges connecting to the sources and destinations, or by aggregating the source nodes and destination nodes into two nodes, respectively. Nevertheless, as to be shown in Section 4, the above strategies do not work. Also, a destination node can be activated only if all source nodes of the hyperedge are activated (see Section \ref{sec:diffSIG}).}

For simplicity, let $\mathtt{u}$ and $\mathtt{v}$ (i.e., the symbols in Typewriter style) represent the nodes $(u,i)$ and $(v,i)$ in SIG for the rest of this paper. We also denote a hyperedge as $e\equiv \mathtt{U}\rightarrow \mathtt{v}$, where $\mathtt{U}$ is a set of source nodes and $\mathtt{v}$ is the destination node. Let the associated edge weight be $p_{e}$, which represents the \emph{activation probability} for $\mathtt{v}$ to be activated if all source nodes in $\mathtt{U}$ are activated. Note that the activation probability is for one single hyperedge $\mathtt{U}\rightarrow \mathtt{v}$. Other hyperedges sharing the same destination may have different activation probabilities. For example, part of the source nodes in a hyperedge $\{\mathtt{a},\mathtt{b},\mathtt{c},\mathtt{d}\}\rightarrow \mathtt{x}$ can still activate $\mathtt{x}$, e.g., by $\{\mathtt{a},\mathtt{b},\mathtt{c}\}\rightarrow \mathtt{x}$ with a different hyperedge with its own activation probability.

\subsection{Diffusion Process in Social Item Graph}
\label{sec:diffSIG}
Next we introduce the diffusion process in SIG, which is inspired by the probability-based approach behind \emph{Independent Cascade} (IC) to captures the word-of-mouth behavior in the real world~\cite{Chen10KDD}.\footnote{Notice that diffusion process in SIG is based on IC model since it only requires one diffusion probability parameter associated to each edge whereas LT model requires both influence degree of each edge and an influence threshold for each node. Moreover, several variants of IC model have been proposed \cite{IC1,IC2}. However, they focus on modeling the diffusion process between users, such as aspect awareness \cite{IC1}, which is not suitable for social item graph since the topic is embedded in each SIG node.} This diffusion process fits the item inferences captured in an SIG naturally, as we can derive conditional probabilities on hyperedges to describe the trigger (activation) of purchase actions on a potential purchase.

The diffusion process in SIG starts with all nodes inactive initially.
Let $\texttt{S}$ denote a set of seeds (purchase actions). Let a node $\mathtt{s}\in \texttt{S}$ be a seed. It immediately becomes active. Given all the nodes in a source set $\mathtt{U}$ at iteration $\iota-1$, if they are all active at iteration $\iota$, a hyperedge $e\equiv \mathtt{U}\rightarrow \mathtt{v}$ has a chance to activate the inactive $\mathtt{v}$ with probability $p_{e} $. Each node $(v,i)$ can be activated once, but it can try to activate other nodes multiple times, one for each incident hyperedges. For the seed selection problem that we target on, the total number of activated nodes represents the number of items adopted by customers (called \textit{total adoptions} for the rest of this paper).

\section{Social Item Maximization}

\label{sec:social_item_maximization}

Upon the proposed Social Item Graph (SIG), we now formulate a new
seed selection problem, called \emph{Social Item Maximization Problem}
(SIMP), that selects a set of seed purchase actions to maximize potential sales or revenue in a marketing campaign.
In Section~\ref{sec:IdenHyperedge}, we will describe how to construct the SIG from purchase logs by a machine learning approach.


\begin{definition}
	Given a seed number $k$, a list of targeted items $I$, and a social item
	graph $G_{SI}(V_{SI},E_{H})$, SIMP selects a set $\texttt{S}$ of $k$ seeds in $V_{SI}$ such that $%
	\alpha _{G_{SI}}(\texttt{S})$, \textbf{the \emph{total adoption} function of $\texttt{S}$}, is
	maximized.
\end{definition}


Note that a seed in SIG represents the adoption/purchase action of a specific item by a
particular customer. The total adoption function $\alpha _{G_{SI}}$
represents the total number of product items ($\in I$) purchased. By assigning
prices to products and costs to the selected seeds, an extension of SIMP is to maximize the total revenue subtracted by the cost.

Here we first discuss the challenges in solving SIMP before introducing our algorithm. Note that, for the influence maximization
problem based on the IC model, Kempe et al. propose a $1-1/e$
approximation algorithm \cite{Kempe03KDD}, thanks to the submodularity in the problem. Unfortunately, the submodularity does not
hold for the total adoption function $\alpha _{G_{SI}}(\texttt{S})$ in SIMP.
Specifically, if the function $\alpha _{G_{SI}}(\texttt{S})$ satisfies the submodularity, for any node $\texttt{i}$ and any two subsets of nodes $\texttt{S}_1$ and $\texttt{S}_2$ where $\texttt{S}_1 \subseteq \texttt{S}_2$, $\alpha_{G_{SI}}(\texttt{S}_{1}\bigcup \{\texttt{i}\})-\alpha _{G_{SI}}(\texttt{S}_{1}) \geq \alpha_{G_{SI}}(\texttt{S}_{2}\bigcup \{\texttt{i}\})-\alpha _{G_{SI}}(\texttt{S}_{2})$ should hold. However, a counter example is illustrated below.

\begin{figure}[t]
	\begin{minipage}[t]{0.46\linewidth}
		\centering
		\vspace{7pt}
		\includegraphics[width=1.0in]{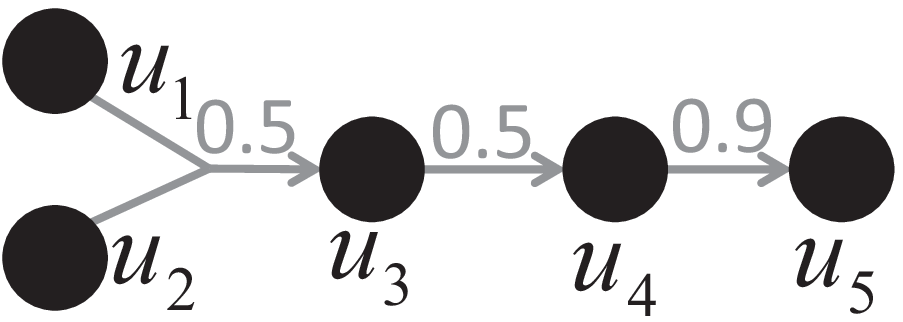}
		\caption{A non-submodular example}
		\label{fig:non_submodular}
		\vspace{-8pt}
	\end{minipage}
	\hspace{2pt}
	\begin{minipage}[t]{0.46\linewidth}
		\centering
		\vspace{0pt}
		\includegraphics[width=1.1in]{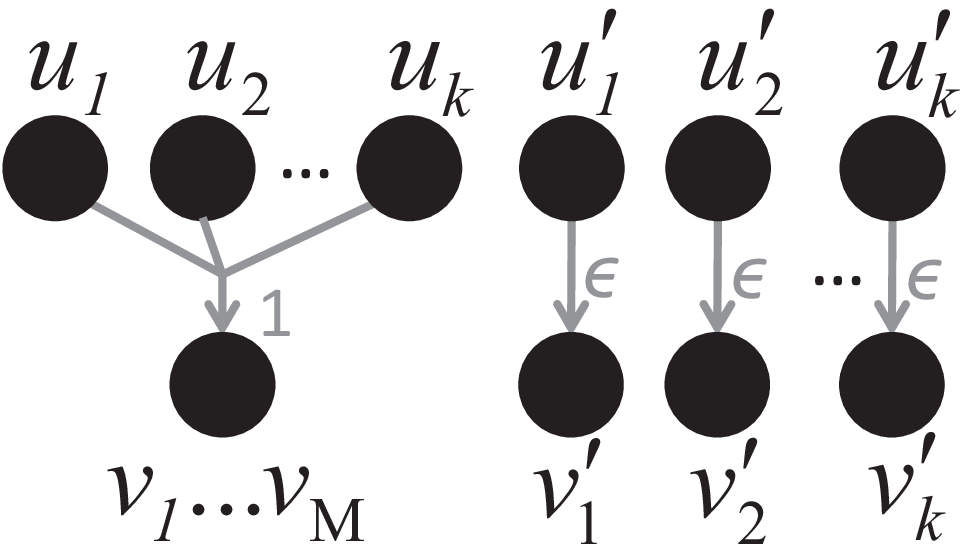}
		\vspace{-10pt}
		\caption{An example of SIMP}\label{fig:greedy_bad_case}
	\end{minipage}
\end{figure}

\begin{example}
	\label{example:non_submodular}
	\sloppy
	Consider an SIMP instance with a customer and
	five items in Figure~\ref{fig:non_submodular}. Consider the case where $%
	\texttt{S}_{1}=\{\texttt{u}_{4}\}$, $\texttt{S}_{2}=\{\texttt{u}_{1},\texttt{u}_{4}\}$, and $\texttt{i}$ corresponds to node $%
	\texttt{u}_{2}$. For seed sets $\{\texttt{u}_{4}\}$, $\{\texttt{u}_{2},\texttt{u}_{4}\}$, $\{\texttt{u}_{1},\texttt{u}_{4}\}$ and $%
	\{\texttt{u}_{1},\texttt{u}_{2},\texttt{u}_{4}\}$, $\alpha _{G_{SI}}(\{\texttt{u}_{4}\})=1.9$, $\alpha
	_{G_{SI}}(\{\texttt{u}_{2},\texttt{u}_{4}\})=2.9$, $\alpha _{G_{SI}}(\{\texttt{u}_{1},\texttt{u}_{4}\})=2.9$,
	and $\alpha _{G_{SI}}(\{\texttt{u}_{1},\texttt{u}_{2},\texttt{u}_{4}\})=4.4$. Thus, $\alpha
	_{G_{SI}}(\texttt{S}_{1}\bigcup \{\texttt{u}_{2}\})-\alpha _{G_{SI}}(\texttt{S}_{1})=1<1.5=\alpha _{G_{SI}}(\texttt{S}_{2}\bigcup \{\texttt{u}_{2}\})-\alpha
	_{G_{SI}}(\texttt{S}_{2})$. Hence, the submodularity does not hold.
\end{example}

Since the submodularity does not exist in SIMP, the $1-1/e$
approximation ratio of the greedy algorithm in \cite{Kempe03KDD} does not
hold. Now, an interesting question is how large the ratio becomes.
Example~\ref{example:greedy_bad_case} shows an SIMP instance where the
greedy algorithm performs poorly.

\begin{example}
	Consider an example in Figure~\ref{fig:greedy_bad_case}, where nodes $\texttt{v}_{1}$%
	, $\texttt{v}_{2}$,...,$\texttt{v}_{M}$ all have a hyperedge with the probability as 1 from
	the same $k$ sources $\texttt{u}_{1}$, $\texttt{u}_{2}$,..., $\texttt{u}_{k}$, and $\epsilon$ is an
	arbitrarily small edge probability $\epsilon >0$. The greedy algorithm
	selects one node in each iteration, i.e., it selects $\texttt{u}_{1}^{\prime }$, $%
	\texttt{u}_{2}^{\prime }$...$\texttt{u}_{k}^{\prime }$ as the seeds with a total adoption $%
	k+k\epsilon $. However, the optimal solution actually selects $\texttt{u}_{1}$, $\texttt{u}_{2}$%
	,..., $\texttt{u}_{k}$ as the seeds and results in the total adoption $M+k$.
	Therefore, the approximation ratio of the greedy algorithm is at least $%
	(M+k)/(k+k\epsilon )$, which is close to $M/k$ for a large $M$, where $M$
	could approach $\left\vert V_{SI}\right\vert $ in the worst case. \label%
	{example:greedy_bad_case}
\end{example}

\begin{figure}[t]
	\centering
	\subfigure[Original] {\includegraphics[height =
		0.6 in]{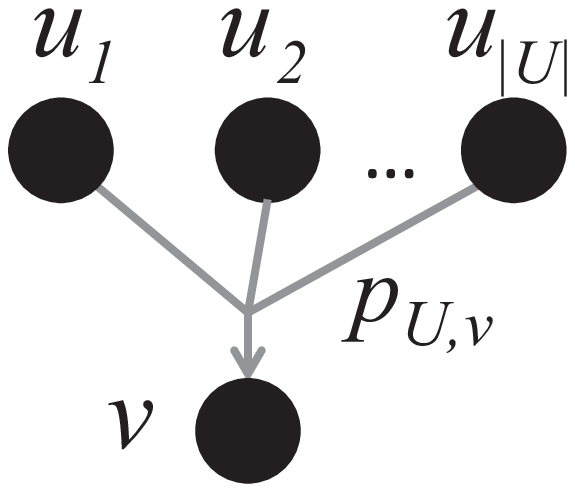}} \hspace{0.8 in}
	\subfigure[Transformed] {\includegraphics[height =
		0.6 in]{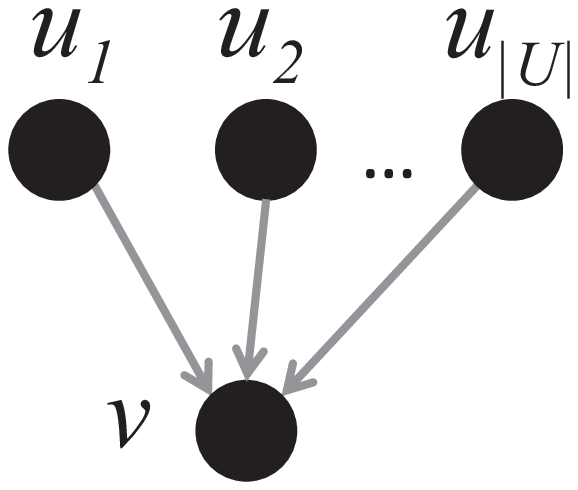}} \vspace{-10pt}
	\caption{An illustration of graph transformations}
	\label{fig:transformation_illustration}
\end{figure}

One may argue that the above challenges in SIMP may be alleviated
by transforming $G_{SI}$ into a graph with only simple edges,
as displayed in Figure~\ref{fig:transformation_illustration}, where the
weight of every $\texttt{u}_{i} \rightarrow \texttt{v})$ with $\texttt{u}_{i}\in \texttt{U}$ can be set independently.
However, if a source node $\texttt{u}_{m}\in \texttt{U}$ of $\texttt{v}$ is difficult to activate,
the probability for $\texttt{v}$ to be activated approaches zero in Figure~\ref%
{fig:transformation_illustration}~(a) due to $\texttt{u}_{m}$. However, in
Figure~\ref{fig:transformation_illustration}~(b), the destination  $\texttt{v}$
is inclined to be activated by sources in $\texttt{U}$, especially when $\texttt{U}$
is sufficiently large. Thus, the idea of graph transformation does not work.

\subsection{Hyperedge-Aware Greedy (HAG)}

\label{subsec:SIMP_algo}
Here, we propose an  algorithm for SIMP, \emph{Hyperedge-Aware Greedy (HAG)}, with performance guarantee. The approximation ratio is proved in Section~\ref{subsec:hardness}.
A hyperedge requires all its sources activated first in order to
activate  the destination. Conventional single node greedy algorithms
perform poorly because hyperedges are not considered. To address this important issue, we propose \textit{%
	Hyperedge-Aware Greedy (HAG)} to select multiple seeds in each
iteration.

A naive algorithm for SIMP would examine, $C_{k}^{|V_{SI}|}$ combinations are involved to choose $k$ seeds.
In this paper, as multiple seeds tend to activate all source nodes
of a hyperedge in order to activate its destination, an effective way is to consider only the
combinations which include the source nodes of any hyperedge. We call the
source nodes of a hyperedge as a \emph{source combination}. Based on this
idea, in each iteration, HAG includes the source combination leading to
\emph{the largest increment on total adoption divided by the number of new
	seeds added in this iteration}. Note that only the source combinations with
no more than $k$ sources are considered. The iteration continues until $k$
seeds are selected. Note that HAG does not restrict the seeds to be the
source nodes of hyperedges. Instead, the source node $\mathtt{u}$ of any
simple edge $\mathtt{u} \rightarrow \mathtt{v}$ in SIG is also examined.


\noindent \textbf{Complexity of HAG. }To select $k$ seeds, HAG takes at most
$k$ rounds. In each round, the source combinations of $|E_{H}|$ hyperedges
are tried one by one, and the diffusion cost is $c_{dif}$, which will be
analyzed in Section~\ref{subsec:mc_acceleration}. Thus, the time complexity
of HAG is $O(k\times |E_{H}|\times c_{dif})$.

\subsection{Acceleration of Diffusion Computation}

\label{subsec:mc_acceleration}

To estimate the total adoption for a seed set, it is necessary to perform Monte Carlo simulation based on the
diffusion process described in Section~\ref{sec:diffSIG} for many times.
Finding the total adoption is very expensive, especially when a node $\texttt{v}$ can be activated by a hyperedge with a large source set $\texttt{U}$, which indicates that there also exist many other hyperedges with an arbitrary subset of $\texttt{U}$ as the source set to activate $\texttt{v}$.
In other words, enormous hyperedges need to be examined for the diffusion on an SIG. It is essential to reduce the computational overhead.
To address this issue, we propose a new index structure, called \emph{SIG-index}, by exploiting FP-Tree~\cite{DataMiningTextbook11}
to pre-process source combinations in
hyperedges in a compact form in order to facilitate efficient derivation of activation probabilities during the diffusion process.

The basic idea behind SIG-index is as follows.
For each node $\mathtt{v}$ with the set of activated
in-neighbors $N_{\mathtt{v},\iota}^{a}$ in iteration $\iota$, if $\mathtt{v}$
has not been activated before $\iota$, the diffusion process will try to activate
$\mathtt{v}$ via every hyperedge $\mathtt{U}\rightarrow \mathtt{v}$
where the last source in $\mathtt{U}$ has been activated in iteration $\iota-1$.
To derive the activation probability of a node $\mathtt{v}$ from the weights of hyperedges associated with $%
\mathtt{v}$, we first define the activation probability as follows.

\begin{definition}
	The activation probability of $\mathtt{v}$ at $\iota$ is
	\begin{equation*}
	ap_{\mathtt{v},\iota} = 1 - \prod_{\mathtt{U} \rightarrow \mathtt{v}\in E_H,
		\mathtt{U} \subseteq N_{\mathtt{v},\iota-1}, \mathtt{U} \nsubseteq N_{\mathtt{v}%
			,\iota-2}} {(1-p_{\mathtt{U} \rightarrow \mathtt{v}})}.
	\end{equation*}
	where $N_{\mathtt{v},\iota-1}$ and $N_{\mathtt{v},\iota-2}$ denote the set of
	active neighbors of $\mathtt{v}$ in iteration $\iota-1$ and $\iota -2$,
	respectively. \label{def:ap}
\end{definition}

The operations on an SIG-index occur two phases: Index Creation Phase and Diffusion Processing Phase. As all hyperedges satisfying Definition~\ref{def:ap} must be
accessed, the SIG-index stores the hyperedge probabilities in Index Creation Phase. Later, the SIG-index is updated in Diffusion Processing Phase to derive the activation probability efficiently.

\textbf{Index Creation Phase.} For each hyperedge $\mathtt{U}\rightarrow \mathtt{v}
$, we first regard each source combination $\mathtt{\mathtt{U}}=\{\mathtt{v}%
_{1},...\mathtt{v}_{|\mathtt{U}|}\}$ as a transaction to build an FP-tree~\cite{DataMiningTextbook11}
by setting the minimum support as 1. As such,  $\mathtt{v}_{1},...\mathtt{v}_{|\mathtt{%
		U}|}$ forms a path $r\rightarrow \mathtt{v}_{1}\rightarrow \mathtt{v}%
_{2}...\rightarrow \mathtt{v}_{|\mathtt{U}|}$ from the root $r$
in the FP-tree to node $\mathtt{v}_{|\mathtt{U}|}$ in $\mathtt{\mathtt{U}}$.
Different from the FP-Tree, the SIG-index associates the
probability of each hyperedge $\mathtt{U}\rightarrow \mathtt{v}$ with the
last source node $\mathtt{v}_{|\mathtt{U}|}$ in $\mathtt{\mathtt{U}}$.%
\footnote{%
	For ease of explanation, we assume the order of nodes in the SIG-index
	follows the ascending order of subscript.}
Initially the probability associated with the root $r$ is 0. Later the
the SIG-index is updated during the diffusion process.
Example~\ref{eg_3} illustrates the SIG-index created based on an SIG.

\begin{figure}[t]
	\centering
	\subfigure[Initial] {\includegraphics[width =
		1.5 in]{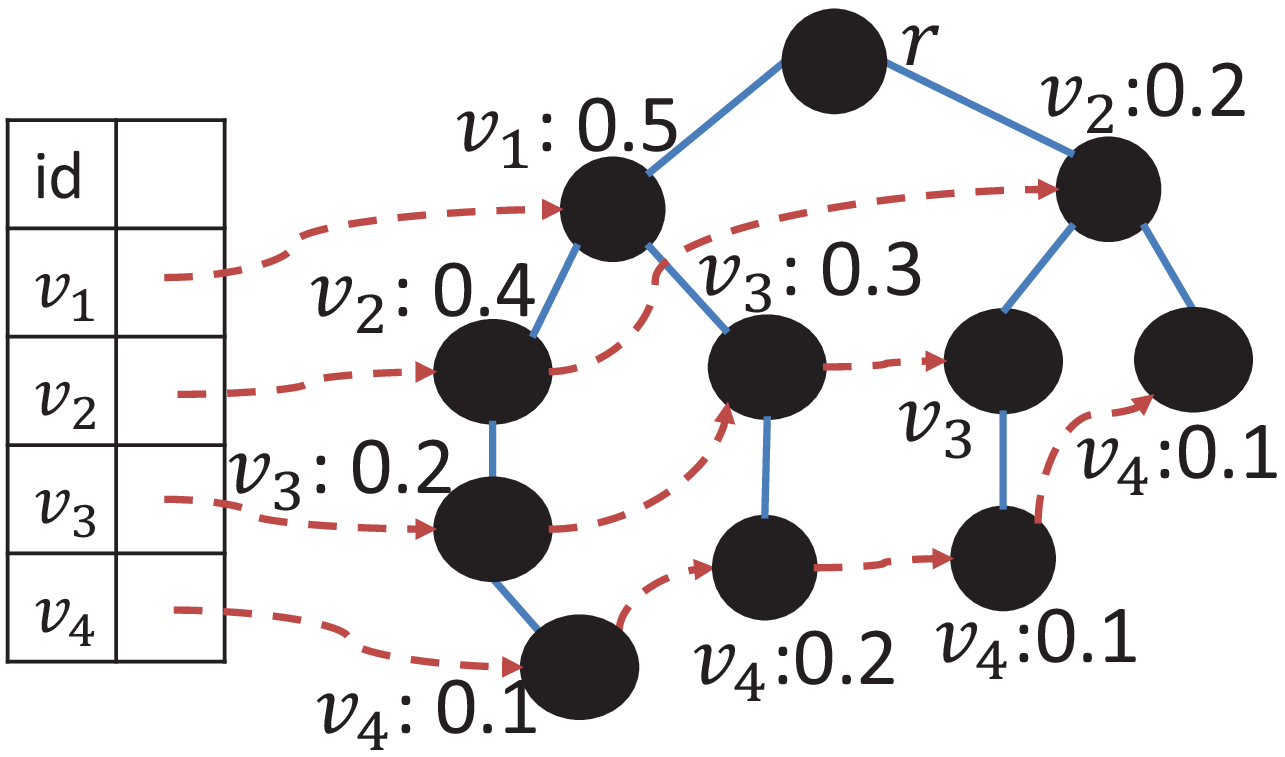}} \vspace{-10pt}
	\subfigure[After $\texttt{v}_2$ is activated] {\includegraphics[width = 1.5
		in]{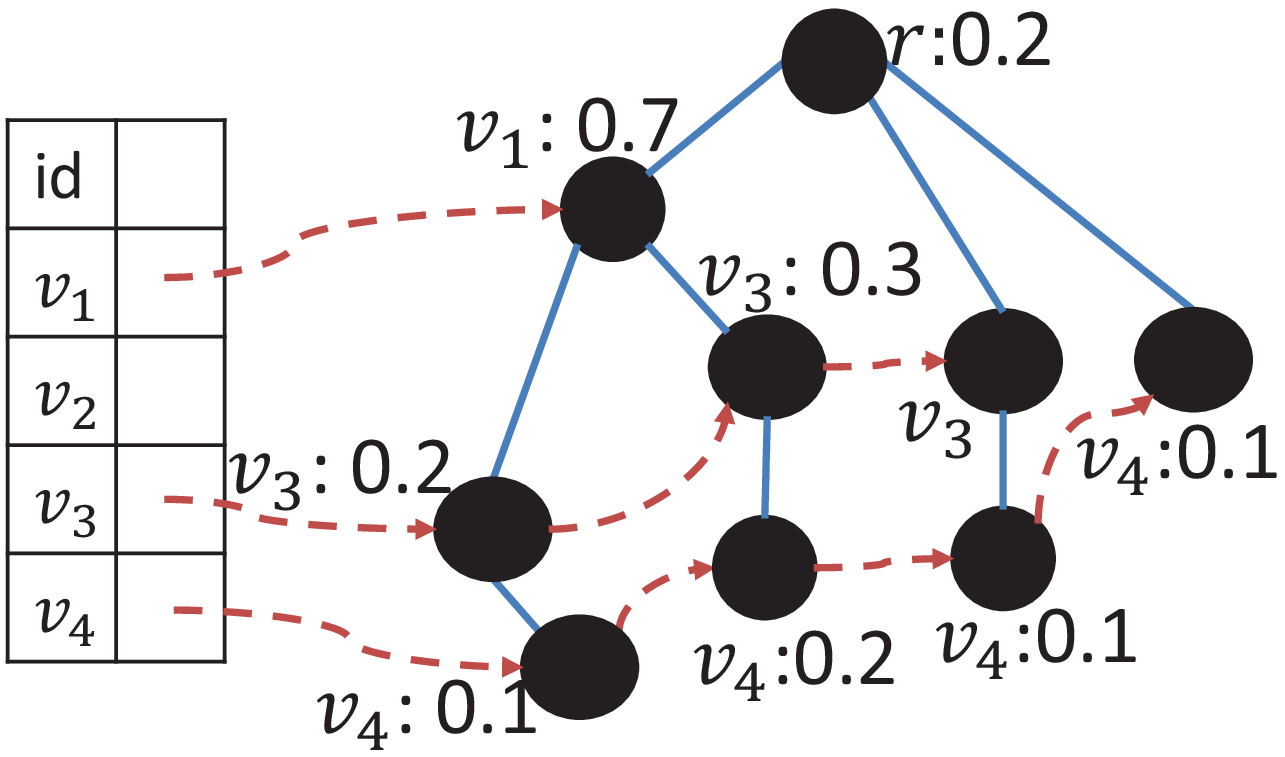}} 
	\caption{An illustration of SIG-index}
	\label{fig:fp_index}
\end{figure}

\begin{example}
\label{eg_3}
	Consider an SIG graph with five nodes, $\mathtt{v}_{1}$-$\mathtt{v}_{5}$, and nine hyperedges with their associated probabilities in parentheses:\footnote{For simplicity, the hyperedges in this example only have $\mathtt{v}_5$ as the destination.} $\{%
	\mathtt{v}_{1}\}\rightarrow \mathtt{v}_{5}$ (0.5), $\{\mathtt{v}_{1},\mathtt{%
		v}_{2}\}\rightarrow \mathtt{v}_{5}$ (0.4), $\{\mathtt{v}_{1},\mathtt{v}_{2},%
	\mathtt{v}_{3}\}\rightarrow \mathtt{v}_{5}$ (0.2), $\{\mathtt{v}_{1},\mathtt{%
		v}_{2},\mathtt{v}_{3},\mathtt{v}_{4}\}\rightarrow \mathtt{v}_{5}$ (0.1), $\{%
	\mathtt{v}_{1},\mathtt{v}_{3}\}\rightarrow \mathtt{v}_{5}$ (0.3), $\{\mathtt{%
		v}_{1},\mathtt{v}_{3},\mathtt{v}_{4}\}\rightarrow \mathtt{v}_{5}$ (0.2), $\{%
	\mathtt{v}_{2}\}\rightarrow \mathtt{v}_{5}$ (0.2), $\{\mathtt{v}_{2},\mathtt{%
		v}_{3},\mathtt{v}_{4}\}\rightarrow \mathtt{v}_{5}$ (0.1), $\{\mathtt{v}_{2},%
	\mathtt{v}_{4}\}\rightarrow \mathtt{v}_{5}$ (0.1). Figure~\ref{fig:fp_index}
	(a) shows the SIG-index initially created for node $\mathtt{v}_{5}$.
\end{example}

\textbf{Diffusion Processing Phase.} The activation probability in an iteration is
derived by traversing the initial SIG-index, which takes $O(|E_{H}|)$ time.
However, a simulation may iterate a lot of times. To further accelerate the
traversing process, we adjust the SIG-index for the
activated nodes in each iteration. More specifically, after a node $\mathtt{v}%
^{a}$ is activated, accessing an hyperedge $\mathtt{U}\rightarrow \mathtt{v}$
with $\mathtt{v}^{a}\in \mathtt{U}$ becomes easier since the number
remaining inactivated nodes in $\mathtt{U}-\{\mathtt{v}^{a}\}$ is reduced.
Accordingly, SIG-index is modified by traversing every vertex labeled as $\mathtt{v}%
^{a}$ on the SIG-index in the following steps.
1) If $\mathtt{v}^{a}$ is associated with a
probability $p_{a}$, it is crucial to aggregate the old activation
probabilities $p_{a}$ of $\mathtt{v}^{a}$ and $p_{p}$ of its parent $\mathtt{v}^{p}$, and update activation probability
associated with $\mathtt{v}^{p}$ as $1-(1-p_{a})(1-p_{p})$, since the source combination needed for accessing the hyperedges
associated with $\mathtt{v}^{a}$ and $\mathtt{v}^{p}$
becomes the same. The aggregation is also performed when $\mathtt{v}^{p}$ is $r$.
2) If $\mathtt{v}^{a}$ has any
children $\mathtt{c}$, the parent of $\mathtt{c}$ is
changed to be
$\mathtt{v}^{p}$, which removes the processed $\mathtt{v}^{a}$ from the index.
3) After processing every node $\mathtt{v}^{a}$ in the SIG-index, we obtain  the activation probability of $\mathtt{v}$ in the root $r$. After the probability is accessed for activating $\texttt{v}$, the probability of $r$ is reset to 0 for next iteration.

\begin{example}
	Consider an example with $\mathtt{v}_{2}$ activated in an iteration. To
	update the SIG-index, each vertex $\mathtt{v}_{2}$ in Figure~\ref%
	{fig:fp_index}~(a) is examined  by traversing the linked list of $\mathtt{v}_{2}$. First, the left vertex with label
	$\mathtt{v}_{2}$ is examined. SIG-index reassigns the parent of $\mathtt{v}%
	_{2}$'s child (labeled as $\mathtt{v}_{3}$) to the vertex labeled as $%
	\mathtt{v}_{1}$, and aggregate the probability 0.4 on the $\mathtt{v}_{2}$ and 0.5 on vertex $\mathtt{v}_{1}$, since the hyperedge $\{\mathtt{v%
	}_{1},\mathtt{v}_{2}\}\rightarrow \mathtt{v}_{5}$ can be accessed if the
	node $\mathtt{v}_{1}$ is activated later. The probability of
	$v_{1}$ becomes $1-(1-p_{\mathtt{v}_{1}})(1-p_{\mathtt{v}_{2}})=0.7$. Then
	the right vertex with label $\mathtt{v}_{2}$ is examined. The parent of
	its two children is reassigned to the root $r$. Also, the
	probability of itself (0.2) is aggregated with the root $r$,
	indicating that the activation probability of node $v_{5}$ in the next
	iteration is 0.2.
\end{example}

\noindent \textbf{Complexity Analysis. } For Index Creation Phase, the
initial SIG-index for $\mathtt{v}$ is built by examining the hyperedges two times with
$O(|E_{H}|)$ time. The number of vertices in SIG-index is at most $%
O(c_{d}|E_{H}|)$, where $c_{d}$ is the number of source nodes in the
largest hyperedge. During Diffusion Processing Phase, each vertex in SIG-index is examined
only once through the node-links, and the parent of a vertex is changed at
most $O(c_{d})$ times. Thus, the overall time to complete a diffusion
requires at most $O(c_{d}|E_{H}|)$ time.

\subsection{Hardness Results}
\label{subsec:hardness}

\begin{figure}[t]
	\centering
	\includegraphics[width=2 in]{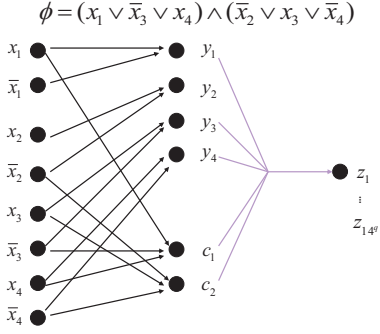}
	\vspace{-10pt}
	\caption{An illustration instance built for 3-SAT}
	\label{fig:reduction_3SAT}
\end{figure}

From the discussion earlier, it becomes obvious that SIMP is difficult. In the following, we will prove that
SIMP is inapproximable with a non-constant ratio $n^{c}$ for all $c<1$, with a gap-introducing reduction from an NP-complete problem 3-SAT to SIMP, where $n$ is the number of nodes in an SIG.

Given an expression $\phi$ in CNF, in which each clause has three variables, 3-SAT is to decide whether $\phi$ is satisfiable. The reduction includes two parts: 1) If $\phi$ is satisfiable,  its transformed SIMP instance has optimal total adoption larger than $\alpha_{SAT}$. 2) If $\phi$ is unsatisfiable, its transformed SIMP instance has optimal total adoption less than $\alpha_{UNSAT}$. (Refer Lemma~\ref{keylemma} for actual $\alpha_{SAT}$ and $\alpha_{UNSAT}$.)
Then inapproximability obtained by this gap-introducing induction is $\frac{\alpha_{SAT}}{\alpha_{UNSAT}}$. Note that an $\frac{\alpha_{SAT}}{\alpha_{UNSAT}}$-approximation algorithm is able to solve 3-SAT because it always returns a solution larger than $\frac{\alpha_{UNSAT}}{\alpha_{SAT}} \times \alpha_{SAT} = \alpha_{UNSAT}$ for satisfiable $\phi$ and a solution smaller than $\alpha_{UNSAT}$ for unsatisfiable $\phi$, implying such approximation algorithm must not exist. Also note that the theoretical result only shows that for any algorithm, there exists a problem instance of SIMP (i.e., a pair of an SIG graph and a seed number $k$) that the algorithm can not obtain a solution better than $1/n$ times the optimal solution. It does not imply that an algorithm always performs badly in every SIMP instance.




\begin{lemma}
	\label{keylemma} For a positive integer $q$, there is a
	gap-introducing reduction from 3-SAT to SIMP, which transforms an $n_{var}$%
	-variables expression $\phi$ to an SIMP instance with the SIG as $G_{SI}(V_{SI},E_H)$ and the $k$ as $%
	n_{var}$ such that \newline
	$\bullet $ if $\phi $ is satisfiable, $\alpha_{G_{SI}}^*\geq
	(m_{cla}+3n_{var})^{q}$, and \newline
	$\bullet $ if $\phi $ is not satisfiable, $\alpha_{G_{SI}}^* < m_{cla}+3n_{var}$, \newline
	where $\alpha_{G_{SI}}^*$ is the optimal solution of this instance, $n_{var}$ is the number of Boolean variables, and $m_{cla}$ is the number
	of clauses.
	Hence there is no $(m_{cla}+3n_{var})^{q-1}$ approximation
	algorithm for SIMP unless P $= $ NP.
\end{lemma}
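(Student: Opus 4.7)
The plan is to design a three-part SIG gadget that turns the SAT/UNSAT dichotomy into a polynomial adoption gap. Given a 3-CNF formula $\phi$, I would build (i) a literal-and-clause gadget whose activation pattern encodes a truth assignment, (ii) a single trigger node $\texttt{W}$ that fires exactly when every variable is covered by a literal seed and every clause is satisfied, and (iii) a deterministic cascade hanging off $\texttt{W}$ whose size is tuned to produce $(m_{cla}+3n_{var})^q$ adoptions. I take the underlying social graph $G$ to be a clique on the relevant customers so that the neighborhood constraint on hyperedge sources is trivially satisfied, and set $k=n_{var}$.

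The gadget indicated in Figure~\ref{fig:reduction_3SAT} would be as follows. For each variable $x_i$ introduce two literal nodes $\texttt{T}_i,\texttt{F}_i$ together with two shared auxiliary nodes $\texttt{a}_i^1,\texttt{a}_i^2$, each reachable from either literal by a weight-$1$ simple edge, so that seeding one literal of $x_i$ deterministically activates three nodes (contributing the $3n_{var}$ term). For each clause $C_j$ introduce a clause node $\texttt{c}_j$ with a weight-$1$ simple edge from each of its three satisfying literals, so $\texttt{c}_j$ activates iff the seed set satisfies $C_j$. The trigger $\texttt{W}$ has a single hyperedge of weight $1$ whose source set is $\{\texttt{a}_i^1\}_{i}\cup\{\texttt{c}_j\}_{j}$; since a hyperedge fires only when every source is active, $\texttt{W}$ activates iff every variable is covered by some literal seed and every clause is activated. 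Finally, $\texttt{W}$ fans out deterministically into $M=(m_{cla}+3n_{var})^q-(m_{cla}+3n_{var})-1$ cascade leaves via weight-$1$ simple edges. The whole construction is polynomial in $|\phi|$ for any fixed $q$.

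For the SAT direction I would seed the $n_{var}$ literals of a satisfying assignment; this deterministically activates all $3n_{var}$ literal-gadget nodes, all $m_{cla}$ clause nodes, $\texttt{W}$, and the entire cascade, yielding $(m_{cla}+3n_{var})^q$ adoptions. For the UNSAT direction I must show that $\texttt{W}$ is unreachable with any size-$n_{var}$ seed set. Placing any seed in the cascade gains just one extra adoption at the cost of an entire variable's literal contribution and hence is dominated by an all-literal seeding. A doubled-up literal seeding ($\geq 2$ seeds on some variable and $0$ on some other) leaves at least one $\texttt{a}_i^1$ inactive, so $\texttt{W}$ never fires. The remaining sub-case is a consistent one-literal-per-variable seed set, which corresponds to a genuine truth assignment; since $\phi$ is unsatisfiable some $\texttt{c}_j$ must remain inactive, so $\texttt{W}$ again fails to fire. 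In every sub-case the total adoption is at most $3n_{var}+m_{cla}-1<m_{cla}+3n_{var}$, and the gap ratio $(m_{cla}+3n_{var})^{q-1}$ yields the stated inapproximability.

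The principal subtlety I expect is the doubled-up case: such a seed set could potentially satisfy more clauses than any consistent assignment, because an inconsistent ``assignment'' satisfies both $x_i$-clauses and $\bar x_i$-clauses simultaneously, and without care this could smuggle $\texttt{W}$ into activation. The safeguard is precisely the inclusion of the coverage witnesses $\texttt{a}_i^1$ in $\texttt{W}$'s source set, which forces every variable to be covered by at least one literal seed — impossible within budget once a variable is doubled up. Once this point is secured the remaining verification is an arithmetic check that $M$ has been chosen so that $3n_{var}+m_{cla}+1+M=(m_{cla}+3n_{var})^q$.
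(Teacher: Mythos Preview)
Your construction has a genuine gap: nothing prevents the seed set from containing the trigger node $\texttt{W}$ itself. In SIMP any node of $V_{SI}$ may be seeded, and a seeded node is active by definition---it need not be ``reached'' through diffusion. So in the UNSAT case one simply places one seed on $\texttt{W}$ and the remaining $n_{var}-1$ seeds on literals; $\texttt{W}$ then deterministically fires all $M$ cascade leaves, yielding total adoption at least $1+M+(n_{var}-1)$, which is of order $(m_{cla}+3n_{var})^q$ and destroys the gap entirely. Your UNSAT case analysis treats only seeds placed on literals or on cascade \emph{leaves} (each leaf indeed contributes a single adoption), but the intermediate trigger is never considered; the phrase ``$\texttt{W}$ is unreachable with any size-$n_{var}$ seed set'' is precisely where the slip occurs.

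The repair is exactly what the paper does: dispense with the single trigger and instead give \emph{each} of the $(m_{cla}+3n_{var})^q$ payoff nodes its own weight-$1$ hyperedge whose source set is the full witness collection (the paper takes $\texttt{Y}=\{\texttt{y}_i\}_i\cup\{\texttt{c}_k\}_k$, playing the role of your $\{\texttt{a}_i^1\}_i\cup\{\texttt{c}_j\}_j$). Seeding any individual payoff node then buys only that one node, so activating the whole payoff block genuinely forces activation of every witness, and your doubled-up/consistent case analysis goes through unchanged. With that modification your argument is essentially the paper's; your second auxiliary $\texttt{a}_i^2$ is harmless padding that only shifts constants.
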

\begin{proof}
	Given a positive integer $q$, for an instance $%
	\phi $ of 3-SAT with $n_{var}$ Boolean variables $a_{1},\dots ,a_{n_{var}}$
	and $m_{cla}$ clauses $C_{1},\dots, C_{m_{cla}}$, we construct an SIG $%
	G_{SI} $ with three node sets $\texttt{X}$, $\texttt{Y}$ and $\texttt{Z}$ as follows. 1) Each Boolean
	variable $a_{i}$ corresponds to two nodes $\texttt{x}_{i}$, $\overline{\texttt{x}}_{i}$ in $\texttt{X}$
	and one node $\texttt{y}_{i}$ in $\texttt{Y}$. 2) Each clause $C_{k}$ corresponds to one node
	$\texttt{c}_{k}$ in $\texttt{Y}$. 3) $\texttt{Z}$ has
	$(|\texttt{X}|+|\texttt{Y}|)^{q}$ nodes. (Thus, $G_{SI}$ has
	$(m_{cla}+3n_{var})^{q}+m_{cla}+3n_{var}$ nodes.) 4) For each $\texttt{y}_{j}$ in $\texttt{Y}$, we	
	add direct edges $\texttt{x}_{j} \rightarrow \texttt{y}_{j}$ and $\overline{\texttt{x}}_{j} \rightarrow \texttt{y}_{j}$. 5) For each
	$\texttt{c}_{k}$ in $\texttt{Y}$, we add direct edges $\alpha \rightarrow \texttt{c}_{k}$, $\beta \rightarrow \texttt{c}_{k}$ and
	$\gamma \rightarrow \texttt{c}_{k}$, where $\alpha$, $\beta$, $\gamma$ are the nodes in $\texttt{X}$
	corresponding to the three literals in $C_{k}$. 6) We add a hyperedge $%
	\texttt{Y} \rightarrow \texttt{z}_v$ from all for every $\texttt{z}_v\in \texttt{Z}$. The probability of every
	edge is set to 1.  An example is illustrated in Figure~\ref{fig:reduction_3SAT}.
	
	We first prove that $\phi$ is satisfiable if and only if $G_{SI}$ has a
	seed set $\texttt{S}$ with $n_{var}$ seeds and the total adoption of $\texttt{S}$ contains $\texttt{Y}$. If $\phi$ is satisfiable, there exists a truth assignment $T$ on Boolean variables $a_{1},\dots ,a_{n_{var}}$ satisfying all clauses of $\phi$. Let $\texttt{S}=\{\texttt{x}_{i}|T(a_{i})=1\}\cup \{\overline{
		\texttt{x}}_{j}|T(a_{j})=0\}$, and $\texttt{S}$ then has $n_{var}$ nodes and the total adoption of $\texttt{S}$ contains $\texttt{Y}$. On the other hand, if $\phi$ is not satisfiable, apparently there
	exists no seed set $\texttt{S}$  with exactly one of $\texttt{x}_{i}$ or $\overline{\texttt{x}}_{i}$ selected for every $i$ such that the total adoption of $\texttt{S}$ contains $\texttt{Y}$. For other cases, 1) all seeds are placed in $\texttt{X}$, but there exists at least one $i$ with both $\texttt{x}_{i}$ and $\overline{\texttt{x}}_{i}$
	selected. In this case, there must exist some $j$ such that none
	of $\texttt{x}_{j}$ or $\overline{\texttt{x}}_{j}$ are selcted (since the
	seed number is $n_{var}$), and thus $\texttt{Y}$ is not covered by
	the total adoption of $\texttt{S}$. 2) A seed is placed in $\texttt{Y}$. In
	this case, the seed can be moved to an adjacent $\texttt{x}_{i}$ without reducing the total adoption. Nevertheless, as explained above,
	there exists no seed set $\texttt{S}$ with all seeds placed in $\texttt{X}$
	such that the total adoption of $\texttt{S}$ contains $\texttt{Y}$, and
	thus the total adoption of any seed set with a seed placed in $\texttt{Y}$
	cannot cover $\texttt{Y}$, either. With above observations, if $\phi$ is not satisfiable, $G_{SI}$ does not have a seed set $\texttt{S}$ with $n_{var}$ seeds such that the total adoption of $\texttt{S}$ contains $\texttt{Y}$.	
	Since the nodes of $\texttt{Z}$ can be activated if and only if the total adoption of
	$\texttt{S}$ contains $\texttt{Y}$ if and only if $\phi$ is satisfiable, we have \newline
	$\bullet$ if $\phi$ is satisfiable, $\alpha_{G_{SI}}^* \geq
	(m_{cla}+3n_{var})^q$, and \newline
	$\bullet$ if $\phi$ is not satisfiable, $\alpha_{G_{SI}}^* <
	m_{cla}+3n_{var}$. \newline
	The lemma follows.
\end{proof}

\begin{theorem}
	\label{hardness} For any $\epsilon >0$, there is no $n^{1-\epsilon}$
	approximation algorithm for SIMP, assuming P $\neq $ NP.
\end{theorem}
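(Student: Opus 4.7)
The plan is to derive Theorem~\ref{hardness} directly from Lemma~\ref{keylemma} by choosing the free parameter $q$ in the reduction appropriately in terms of $\epsilon$. Writing $N := m_{cla}+3n_{var}$ for brevity, Lemma~\ref{keylemma} already produces, from any 3-SAT instance, an SIG on $n = N^{q} + N$ nodes with a SAT/UNSAT optimum gap of $N^{q-1}$. So the entire task is to turn this gap into a gap of size $n^{1-\epsilon}$ by picking $q$ large relative to $1/\epsilon$, and then to argue by contradiction.

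Concretely, given $\epsilon>0$ I would fix an integer $q \geq \lceil 2/\epsilon\rceil$; the exact constant does not matter as long as $q\epsilon > 1$ with slack. First I would bound $n \leq 2 N^{q}$ (valid whenever $N \geq 2$, which we can assume, and small instances can be absorbed into a finite lookup table). Then
\[
n^{1-\epsilon} \;\leq\; 2^{\,1-\epsilon}\, N^{q(1-\epsilon)} \;\leq\; 2\,N^{q-q\epsilon} \;\leq\; 2\,N^{q-2},
\]
which is strictly less than $N^{q-1}$ once $N$ is sufficiently large. Thus the gap guaranteed by Lemma~\ref{keylemma} strictly exceeds $n^{1-\epsilon}$ on all large enough instances.

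Now suppose, for contradiction, that there is a polynomial-time $n^{1-\epsilon}$-approximation algorithm $\mathcal{A}$ for SIMP. Given a 3-SAT instance $\phi$, run the reduction of Lemma~\ref{keylemma} and apply $\mathcal{A}$ to the resulting SIG. If $\phi$ is satisfiable then $\alpha_{G_{SI}}^* \geq N^{q}$, so $\mathcal{A}$ returns a value at least $N^{q}/n^{1-\epsilon} > N^{q}/N^{q-1} = N$; if $\phi$ is unsatisfiable then $\alpha_{G_{SI}}^* < N$, so $\mathcal{A}$ returns a value strictly below $N$. Thresholding the returned value at $N$ therefore decides 3-SAT in polynomial time, contradicting P $\neq$ NP. Hence no such $\mathcal{A}$ exists, which is precisely the theorem.

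The only real obstacle is the bookkeeping in the exponent comparison: the reduction blows the instance up from size roughly $N$ to size roughly $N^{q}$, and one has to verify that $n^{1-\epsilon}$, measured in the post-blow-up $n$, is still dominated by $N^{q-1}$. Once $q$ is chosen so that $q\epsilon$ beats $1$ by a constant margin, this reduces to a one-line comparison of exponents, as above. No new construction or combinatorial argument beyond Lemma~\ref{keylemma} is needed; the theorem is essentially the observation that $q$ in the lemma is an arbitrary positive integer, so the inapproximability factor can be pushed up to $n^{1-\epsilon}$ for every $\epsilon > 0$.
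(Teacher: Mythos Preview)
Your proposal is correct and follows essentially the same approach as the paper: both set $q \geq 2/\epsilon$, invoke Lemma~\ref{keylemma}, and verify via the same chain of exponent inequalities that $n^{1-\epsilon} \leq 2N^{q(1-\epsilon)} \leq 2N^{q-2} \leq N^{q-1}$ (with $N=m_{cla}+3n_{var}$ and $n \leq 2N^{q}$). Your write-up is in fact slightly more careful than the paper's, as you explicitly spell out the thresholding argument and handle the small-instance case.
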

\begin{proof}
	For any arbitrary $\epsilon >0$, we set $q \geq \frac{2}{\epsilon }$. Then, by
	Lemma \ref{keylemma}, there is no $(m_{cla}+3n_{var})^{q-1}$ approximation
	algorithm for SIMP unless P $=$ NP. Then $(m_{cla}+3n_{var})^{q-1}\geq
	2(m_{cla}+3n_{var})^{q-2} \geq 2(m_{cla}+3n_{var})^{q(1-\epsilon )}\geq
	(2(m_{cla}+3n_{var})^{q})^{1-\epsilon }\geq n^{1-\epsilon }$. Since $%
	\epsilon $ is arbitrarily small, thus for any $\epsilon >0$, there is no $%
	n^{1-\epsilon }$ approximation algorithm for SIMP, assuming P $\neq $ NP.
	The theorem follows.
\end{proof}

With Theorem~\ref{hardness}, no algorithm can achieve an approximation ratio better than $n$.
In Theorem~\ref{thm:correctness}, we prove that SIG-index is correct, and HAG with SIG-index
achieves the best ratio, i.e., it is $n$-approximated to SIMP.
Note that the approximation ratio only guarantees the lower bound of total adoption obtained by HAG theoretically.
Later in Section~\ref{subsec:effectiveness}, we empirically show that the total adoption obtained by HAG is comparable to the optimal solution.

\begin{theorem}
	HAG with SIG-index is $n$-approximated, where $n$ is the number of nodes in SIG. \label{thm:correctness}
\end{theorem}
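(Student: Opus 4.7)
The statement bundles two distinct claims, so my plan is to split the proof accordingly: first verify that SIG-index correctly computes the per-iteration activation probability in Definition~\ref{def:ap}, then argue that HAG (which may invoke SIG-index as a black box for diffusion simulation) achieves total adoption within a factor $n$ of the optimum. Because the hardness result of Theorem~\ref{hardness} already forces the ratio to be essentially $n$, the approximation side should be a near-trivial bound; the non-trivial content lies in the index-correctness argument.

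For SIG-index correctness, my approach is a structural induction on the number of activated sources. The invariant I would maintain is that at the start of iteration $\iota$, for every hyperedge $\mathtt{U}\rightarrow \mathtt{v}$ whose sources have been \emph{partially} activated, the probability $p_{\mathtt{U}\rightarrow \mathtt{v}}$ currently sits at the unique tree node corresponding to the highest-indexed source of $\mathtt{U}$ that is still unactivated (or at the root $r$ if all sources of $\mathtt{U}$ are already activated and the contribution has been aggregated). The update rule, which promotes the probability stored at a freshly activated vertex $\mathtt{v}^a$ into its parent via $1-(1-p_a)(1-p_p)$, preserves this invariant because independence of hyperedges lets probabilities compose multiplicatively in the $1-p$ form, and the path structure of FP-Tree ensures that only hyperedges whose remaining unactivated sources share the same path contribute. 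The key point to verify is that a hyperedge's probability migrates to the root exactly at the iteration when its last source is activated, and is reset to $0$ afterwards, so each hyperedge contributes once and only in the correct $\iota$ — this matches the condition $\mathtt{U}\subseteq N_{\mathtt{v},\iota-1}$ and $\mathtt{U}\nsubseteq N_{\mathtt{v},\iota-2}$ in Definition~\ref{def:ap}.

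For the approximation ratio, the argument is essentially a counting bound. Let $\alpha^\ast$ be the optimal total adoption and $\alpha_{\text{HAG}}$ the total adoption produced by HAG. Because the total adoption never exceeds the number of nodes in the SIG, we have $\alpha^\ast \leq n$. On the other hand, HAG always selects at least one seed (assuming $k\geq 1$), and every selected seed is itself activated, so $\alpha_{\text{HAG}} \geq 1$. Combining these, $\alpha_{\text{HAG}} \geq 1 \geq \alpha^\ast / n$, giving an $n$-approximation. Theorem~\ref{hardness} shows no algorithm does essentially better, so HAG is optimal in worst-case ratio.

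The main obstacle will be the SIG-index invariant, specifically the bookkeeping when an activated vertex has children whose subtrees encode hyperedges still waiting on additional sources: I have to check that reattaching those children to the grandparent $\mathtt{v}^p$ preserves both the path-identity of every unfinished hyperedge and the one-shot guarantee that Definition~\ref{def:ap} requires (a hyperedge contributes in the single iteration when its source set first becomes fully active). A careful case analysis on whether the activated $\mathtt{v}^a$ itself stores a probability, whether its parent is the root, and whether siblings exist should close the argument; the approximation part, by contrast, requires only the two bounds above.
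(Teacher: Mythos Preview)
Your proposal is correct and, on the approximation side, identical to the paper: the paper also bounds $\alpha^\ast\le n$ and $\alpha_{\text{HAG}}\ge 1$ (at least one seed is chosen and activated), then appeals to Theorem~\ref{hardness} to note that this trivially tight ratio is essentially best possible.

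For the SIG-index correctness, you and the paper argue the same underlying fact --- that a hyperedge's probability reaches the root $r$ in exactly the iteration $\iota$ where its source set first becomes fully active --- but you package it differently. The paper gives a three-line proof by contradiction: if some qualifying hyperedge were not aggregated at $\iota$, it could not have been aggregated earlier (since $\mathtt{U}\nsubseteq N_{\mathtt{v},\iota-2}$) and must have been aggregated no later (since $\mathtt{U}\subseteq N_{\mathtt{v},\iota-1}$), a contradiction. Your plan instead maintains an explicit structural invariant (the probability of each hyperedge sits at the tree node corresponding to its highest-indexed still-inactive source) and checks that the update rule preserves it. Your version is more explicit about why the reattachment of children and the $1-(1-p_a)(1-p_p)$ aggregation are sound, which the paper's terse contradiction takes for granted; conversely, the paper's argument is shorter because it treats the migration mechanism as self-evident from the description of the Diffusion Processing Phase. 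Either route closes the proof.
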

\begin{proof}
	First, we prove that SIG-index obtains $ap_{\texttt{v},\iota}$ correctly.
	Assume that there exists an incorrect $ap_{\texttt{v},\iota}$, i.e., there exists an hyperedge $\texttt{U} \rightarrow \texttt{v}$ satisfying the conditions in Definition~\ref{def:ap} (i.e., $\texttt{U} \nsubseteq N_{\texttt{v},\iota-2}$ and $\texttt{U} \subseteq N_{\texttt{v},\iota-1}$) but its probability is not aggregated to $r$ in $\iota$. However, the probability can not be aggregated before $\iota$ since $\texttt{U} \nsubseteq N_{v,\iota-2}$ and it must be aggregated no later than $\iota$ since $\texttt{U} \subseteq N_{\texttt{v},\iota-1}$. There is a contradiction.
	
	Proving that HAG with SIG-index is an $n$-approximation algorithm is
	simple. The upper bound of total adoption for the optimal algorithm is $n$,
	while the lower bound of the total adoption for HAG is $1$ because at least
	one seed is selected. In other words, designing an approximation algorithm
	for SIMP is simple, but it is much more difficult to have the hardness
	result for SIMP, and we have proven that SIMP is inapproximable within $%
	n^{1-\epsilon}$ for any arbitrarily small $\epsilon $.
\end{proof}

\begin{theorem}
	\label{hardness} For any $\epsilon >0$, there is no $n^{1-\epsilon }$
	approximation algorithm for SIMP, assuming P $\neq $ NP.
\end{theorem}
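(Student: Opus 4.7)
The plan is to leverage Lemma~\ref{keylemma} directly, since it already supplies a gap-introducing reduction from 3-SAT to SIMP producing SIG instances with $n=(m_{cla}+3n_{var})^q + (m_{cla}+3n_{var})$ nodes for which no $(m_{cla}+3n_{var})^{q-1}$-approximation can exist unless P $=$ NP. What remains is a parameter choice that translates this gap into $n^{1-\epsilon}$ inapproximability, which is a straightforward arithmetic task rather than any new combinatorial argument.

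Given an arbitrary $\epsilon>0$, I would first pick a sufficiently large integer $q$, for instance $q\geq 2/\epsilon$, so that $q-1 \geq q(1-\epsilon)+1$. Writing $M := m_{cla}+3n_{var}$ for brevity, the total node count satisfies $n \leq 2 M^q$ on any nontrivial instance, since the $M^q$ nodes in $\texttt{Z}$ dominate the $M$ nodes in $\texttt{X}\cup \texttt{Y}$. Chaining these estimates gives
\[
M^{q-1} \;\geq\; 2\,M^{q(1-\epsilon)} \;\geq\; (2M^q)^{1-\epsilon} \;\geq\; n^{1-\epsilon},
\]
where the first step uses $M\geq 2$ together with $q-1-q(1-\epsilon)=q\epsilon-1\geq 1$, the second uses $2\geq 2^{1-\epsilon}$, and the third uses monotonicity of $x\mapsto x^{1-\epsilon}$. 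Consequently, any hypothetical $n^{1-\epsilon}$-approximation algorithm for SIMP would certify at least an $M^{q-1}$-approximation on these constructed instances, directly contradicting Lemma~\ref{keylemma}.

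The main obstacle is really only the bookkeeping: I have to ensure (i) that the additive linear term $M$ in the formula for $n$ gets absorbed into a constant factor, and (ii) that the exponent slack $q-1$ beats $q(1-\epsilon)$ by enough to cover that constant. Both requirements reduce to choosing $q$ large in $1/\epsilon$, so the only delicate point is picking the threshold in a way that keeps the chain of inequalities clean. Since $\epsilon>0$ is arbitrary, this argument rules out every polynomial approximation ratio strictly below the trivial bound $n$, matching the complementary upper bound delivered by HAG in Theorem~\ref{thm:correctness}.
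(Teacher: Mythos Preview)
Your proposal is correct and follows essentially the same approach as the paper: both pick $q\geq 2/\epsilon$, invoke Lemma~\ref{keylemma}, and then chain the inequalities $M^{q-1}\geq 2M^{q(1-\epsilon)}\geq (2M^q)^{1-\epsilon}\geq n^{1-\epsilon}$ using $M\geq 2$ and $n\leq 2M^q$. The only cosmetic difference is that the paper inserts the intermediate step $M^{q-1}\geq 2M^{q-2}$ before reaching $2M^{q(1-\epsilon)}$, whereas you combine these into one step; the justification is the same.
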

\begin{proof}
For any arbitrary $\epsilon >0$, we set $q \geq \frac{2}{\epsilon }$. Then, by
Lemma \ref{keylemma}, there is no $(m_{cla}+3n_{var})^{q-1}$ approximation
algorithm for SIMP unless P $=$ NP. Then $(m_{cla}+3n_{var})^{q-1}\geq
2(m_{cla}+3n_{var})^{q-2} \geq 2(m_{cla}+3n_{var})^{q(1-\epsilon )}\geq
(2(m_{cla}+3n_{var})^{q})^{1-\epsilon }\geq n^{1-\epsilon }$. Since $%
\epsilon $ is arbitrarily small, thus for any $\epsilon >0$, there is no $%
n^{1-\epsilon }$ approximation algorithm for SIMP, assuming P $\neq $ NP.
The theorem follows.
\end{proof}

\begin{corollary}
	HAG with SIG-index is $n$-approximated, where $n$ is the number of nodes in SIG. \label{thm:correctness}
\end{corollary}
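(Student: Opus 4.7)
The plan is to establish the corollary by combining two observations: that the SIG-index faithfully implements Definition~\ref{def:ap}, and that the trivial bound $\alpha_{G_{SI}}^* \leq n$ already suffices to yield an $n$-approximation. Since the corollary restates the content of Theorem~\ref{thm:correctness} proved above, I would reuse the approximation-ratio argument verbatim and spend most of the effort elaborating the SIG-index correctness portion.

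First, I would verify correctness of SIG-index by induction on the iteration index $\iota$. The base case $\iota = 0$ is immediate: no source is active, the probability stored at the root is $0$, and the product in Definition~\ref{def:ap} is empty. For the inductive step, observe that every hyperedge $\mathtt{U} \rightarrow \mathtt{v}$ is represented as a unique path $r \rightarrow \mathtt{v}_1 \rightarrow \cdots \rightarrow \mathtt{v}_{|\mathtt{U}|}$ with $p_{\mathtt{U} \rightarrow \mathtt{v}}$ stored on the last vertex, where the ordering is the fixed subscript order used at index creation. Such a hyperedge contributes to $ap_{\mathtt{v},\iota}$ exactly when its deepest (last-ordered) source is activated in iteration $\iota-1$. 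The update routine traverses each newly activated vertex $\mathtt{v}^a$ along its node-link, reparents the children of $\mathtt{v}^a$ to the parent $\mathtt{v}^p$, and aggregates the stored probability $p_a$ with $p_p$ as $1-(1-p_a)(1-p_p)$. This is exactly the probabilistic OR of independent activation events, and the reparenting preserves the invariant that every surviving path corresponds to the set of still-inactive sources of some hyperedge. Hence, after all activations of iteration $\iota-1$ have been processed, the probability at the root of the sub-index for $\mathtt{v}$ equals the value prescribed in Definition~\ref{def:ap}, and the Monte Carlo simulation using that value draws from the correct distribution.

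Second, the approximation ratio is immediate: no customer can adopt an item more than once, so $\alpha_{G_{SI}}^* \leq |V_{SI}| = n$; meanwhile HAG selects at least one seed and each seed is itself an adoption, giving $\alpha_{G_{SI}}(\texttt{S}_{\mathrm{HAG}}) \geq 1 \geq \alpha_{G_{SI}}^*/n$. The main obstacle in this plan is the inductive step for SIG-index correctness, since multiple sources of the same hyperedge can be activated in different iterations and one must rule out both loss and double-counting of a hyperedge's probability. The key lever is that the FP-tree-style path of each hyperedge is unique and the source ordering is fixed at construction time; this pins down a single ``last-activated'' source whose update aggregates the hyperedge's probability into the root in precisely the iteration demanded by Definition~\ref{def:ap}, so each hyperedge is absorbed exactly once and exactly when it should be.
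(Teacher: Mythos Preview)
Your approximation-ratio argument is exactly the paper's: the optimal adoption is bounded by $n=|V_{SI}|$, HAG's adoption is at least $1$ because at least one seed is chosen, hence the ratio is at most $n$. For the corollary itself the paper says nothing more than this, adding only the remark that the SIG-index is an acceleration device that does not alter HAG's output.

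Where you differ is in the treatment of SIG-index correctness. The paper (in the preceding theorem with the same label) dispatches this in two lines by contradiction: if some qualifying hyperedge's probability were not aggregated to the root at iteration $\iota$, then since $\mathtt{U}\nsubseteq N_{\mathtt{v},\iota-2}$ it could not have been aggregated earlier, and since $\mathtt{U}\subseteq N_{\mathtt{v},\iota-1}$ it must have been aggregated by $\iota$---contradiction. Your inductive argument tracking the FP-tree path and the reparent/aggregate updates is more explicit and arguably more informative about \emph{why} the mechanism works, at the cost of length.

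One slip to fix: you write that a hyperedge contributes to $ap_{\mathtt{v},\iota}$ ``exactly when its deepest (last-ordered) source is activated in iteration $\iota-1$.'' That is not what Definition~\ref{def:ap} says, and it is not what the index does either. The condition $\mathtt{U}\subseteq N_{\mathtt{v},\iota-1}$, $\mathtt{U}\nsubseteq N_{\mathtt{v},\iota-2}$ is governed by the \emph{chronologically last-activated} source of $\mathtt{U}$, which need not be the one stored deepest in the fixed subscript order of the tree. Your own concluding paragraph gets this right (``last-activated''), so the inductive step should be phrased accordingly: after each activation the surviving path for $e$ records exactly the still-inactive sources of $\mathtt{U}$, and $p_e$ reaches the root in the iteration in which the final remaining source---whichever it is---becomes active.
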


Proving that HAG with SIG-index is an $n$-approximation algorithm is
simple. The upper bound of total adoption for the optimal algorithm is $n$,
while the lower bound of the total adoption for HAG is $1$ because at least
one seed is selected. Note that SIG-index is only for acceleration and does not change the solution quality.

\section{Construction of SIG }
\label{sec:IdenHyperedge}
To select seeds for SIMP, we need to construct the SIG from purchase logs and the social network.
We first create possible hyperedges by scanning the purchase logs.
Let $\tau$ be the timestamp of a given purchase $\texttt{v} = (v, i)$. $v$'s friends purchase and her own purchases that have happened within a given period before $\tau$ are considered as candidate source nodes to generate hyperedges
to $\texttt{v}$.\footnote{The considered periods of item inference and social influence can be different since social influence usually requires a longer time to propagate the messages while the item inference on the e-commerce websites can happen at the same time. The detail setting of time period will be discussed in Section \ref{expsetting}.} For each hyperedge $e$, the main task is then the estimation of its their activation probability $p_e$.
Since $p_{e}$ is unknown, it is estimated by maximizing the likelihood function based on observations in the purchase logs.  Note that learning the activation probability $p_e$ for each hyperedge $e$ faces three challenges.

\noindent \textbf{C1. Unknown distribution of $p_{e}$.} How to properly model $p_{e}$ is critical.\newline
\noindent \textbf{C2. Unobserved activations.} When $\mathtt{v}$ is activated at time $\tau$, this event only implies that at least one hyperedge successfully activates $\mathtt{v}$ before $\tau$. It remains unknown which or hyperedge(s) actually triggers $\texttt{v}$, i.e., it may be caused by either the item inference or social influence or both. Therefore, we cannot simply employ the confidence of an association-rule as the corresponding hyperedge probability.\newline
\noindent \textbf{C3. Data Sparsity.} The number of activations for a user to buy an item is small, whereas the number of possible hyperedge combinations is large. Moreover, new items emerge every day in e-commerce websites, which incurs the notorious cold-start problem. Hence, a method to deal with the data sparsity issue is necessary to properly model a SIG.

To address these challenges, we exploit a statistical inference approach to identify those hyperedges and learn their weights. In the following, we first propose a model of the edge function (to address the first challenge) and then exploit the smoothed expectation and maximization (EMS) algorithm~\cite{Silverman90}
to address the second and third challenges.

\subsection{Modeling of Hyperedge Probability}
\label{sec: ModelBuiling}

To overcome the first challenge, one possible way is to model the number of success activations and the number of unsuccessful activations by the binomial distributions. As such, $p_{e}$ is approximated by the ratio of the number of success activations and the number of total activation trials. However, the binomial distribution function is too complex for computing the maximum likelihood of a vast number of data. To handle big data, previous study reported \cite{Hogg05} that the binomial distribution $(n,p)$ can be approximated by the Poisson distribution $\lambda =np$ when the time duration is sufficiently large.
According to the above study, it is assumed that the number of activations of a hyperedge $e$ follows the Poisson distribution to handle the social influence and item inference jointly. The expected number of events equals to the intensity parameter $\lambda$. Moreover, we use an inhomogeneous Poisson process defined on the space of hyperedges to ensure that $p_{e}$ varies with different $e$.

In the following, a hyperedge is of size $n$, if the cardinality of its
source set $\mathtt{U}$ is $n$. We denote the intensity of the number of activation trials of the hyperedge $e$ as $\lambda _{T}(e)$. Then the successful activations of hyperedge $e$ follows another Poisson process where the intensity is denoted by $\lambda_{A}(e)$. Therefore, the hyperedge probability $p_{e}$ can be derived by parameters $\lambda _{A}(e)$ and $\lambda _{T}(e)$, i.e., $p_{e}=\frac{\lambda _{A}(e)}{\lambda _{T}(e)}$.

The maximum likelihood estimation can be employed to derive $\lambda_{T}(e)$. Nevertheless, $\lambda _{A}(e)$ cannot be derived as explained in the second challenge. Therefore, we use the expectation maximization (EM) algorithm, which is an extension of maximum likelihood estimation containing latent variables to $\lambda _{A}(e)$ which is modeled as the latent variable. Based on the observed purchase logs, the E-step first derives the likelihood Q-function of parameter $p_{e}$ with $\lambda _{A}(e)$ as the latent variables. In this step, the purchase logs and $p_{e}$ are given to find the probability function describing that all events on $e$ in the logs occur according to $p_{e}$, whereas the probability function (i.e., Q-function) explores different possible values on latent variable $\lambda _{A}(e)$. Afterward, The M-step maximizes the Q-function and derives the new $p_{e}$ for E-Step in the next iteration. These two steps are iterated until convergence.

With the employed Poisson distribution and EM algorithm, data sparsity remains
an issue. Therefore, we further exploit a variant of EM algorithm, called \emph{EMS algorithm} \cite{Silverman90}, to alleviate the sparsity problem by estimating the intensity of Poisson process using \textit{similar hyperedges}. The parameter smoothing after each iteration is called S-Step, which is incorporated in EMS algorithm, in addition to the existing E-Step and M-Step.

\subsection{Model Learning by EMS Algorithm}

\label{sec: EMS} Let $p_{e}$ and $\hat{p}_{e}$ denote the true probability and estimated probabilities for hyperedge $e$ in the EMS algorithm, respectively,
where $e=U\rightarrow v$. Let $N_{U}$ and $K_{e}$ denote the number of activations of source set $U$ in the purchase logs
and the number of successful activations on hyperedge $e$, respectively. The EM algorithm is exploited to find the maximum likelihood of $p_{e}$, while ${\lambda _{A}(e)}$ is the latent variable because $K_{e}$ cannot be observed (i.e., only $N_{U}$ can be observed). Therefore, E-Step derives the likelihood function for $\{p_{e}\}$ (i.e., the Q-function) as follows,
\vspace{-2pt}
\begin{equation}
Q(p_{e},\hat{p}_{e}^{(i-1)})=E_{K_{e}}[\log P(K_{e},N_{U}|p_{e})|N_{U},\hat{p}_{e}^{(i-1)}],
\vspace{-2pt}
\end{equation}
\noindent where $\hat{p}_{e}^{(i-1)}$ is the hyperedge probability derived in the previous iteration, Note that $N_{U}$ and $p_{e}^{(i-1)}$ are given parameters in iteration ${i}$, whereas $p_{e}$ is a variable in the Q-function, and $K_{e}$ is a random variable governed by the distribution $P(K_{e}|N_{U},p_{e}^{(i-1)})$. Since $p_{e}$ is correlated to $\lambda _{T}(U)$ and $\lambda _{A}(e)$, we derive the likelihood $P(K_{e},N_{U}|p_{e})$ as follows.
\vspace{-5pt}
\begin{eqnarray*}
	&&P(K_{e},N_{U}|p_{e})\\
	&=&P\left( \{K_{e}\}_{e\in E_{H}},\{N_{U}\}_{U\subseteq
		V_{SI}}|\{p_{e}\}_{e\in E_{H}},\{\lambda _{T}(U)\}_{U\subseteq
		V_{SI}}\right)  \\
	&=&P\left( \{K_{e}\}_{e\in E_{H}}|\{p_{e}\}_{e\in E_{H}},\{N_{U},\lambda
	_{T}(U)\}_{U\subseteq V_{SI}}\right)  \\
	&&\times P\left( \{N_{U}\}_{U\subseteq V_{SI}}|\{\lambda
	_{T}(U)\}_{U\subseteq V_{SI}}\right).
\end{eqnarray*}%
It is assumed that $\{K_{e}\}$ is independent with $\{N_{U}\}$, and $Q(p_{e},\hat{p}_{e}^{(i-1)})$ can be derived as follows:
\vspace{-2pt}
\begin{equation*}
\sum_{e\in E_{H}}{\log P(K_{e}|N_{U},p_{e})}+\log P(\{N_{U}\}_{U\subseteq
	V_{SI}}|\{\lambda _{T}(e)\}_{U\subseteq V_{SI}}).
\vspace{-2pt}
\end{equation*}%
Since only the first term contains the hidden $K_{e}$, only this term varies in different iterations of the EMS algorithm, because $\{N_{U}\}_{U\subseteq V_{SI}}$ in the second term always can be derived by finding the maximum likelihood as follows. Let $p_{U,k}$ denote the probability that the source set $U$ exactly tries to activate the destination node $k$ times, i.e., $p_{U,k}=P\{N_{U}=k\}$. The log-likelihood of $\lambda _{T}$ is
\begin{eqnarray*}
	&&\sum_{k}p_{U,k}\ln (\frac{\lambda _{T}^{k}e^{-{\lambda _{T}}}}{k!})=\sum_{k}p_{U,k}(-\lambda _{T}+k\ln \lambda _{T}-\ln k!) \\
	&=&-\lambda _{T}+(\ln \lambda _{T})\sum_{k}kp_{U,k}-\sum_{k}p_{U,k}\ln
	k!. \\
	&&
\end{eqnarray*}%
We acquire the maximum likelihood by finding the derivative with regard to $\lambda _{T}$:
\begin{equation}
-1+\frac{1}{\lambda _{T}}\sum_{k}kp_{U,k}=0.
\end{equation}%
Thus, the maximum log-likelihood estimation of $\lambda
_{T}=\sum_{k}kp_{U,k}$, representing that the expected activation times
(i.e., $\hat{\lambda}_{T}(e))$ is $N_{U}$. Let $\mathcal{A}=\{(\mathtt{v},\tau )\}$
denote the action log set, where each log $(\mathtt{v},\tau )$ represents that $\mathtt{v}$ is
activated at time $\tau $. $N_{U}$ is calculated by scanning $\mathcal{A}$
and find the times that all the nodes in $\mathtt{U}$ are activated.

Afterward, we focus on the first term of $Q(p_{e},\hat{p}_{e}^{(i-1)})$. Let $p_{e,k}=P\{K_{e}=k\}$ denote the probability that the hyperedge $e$ exactly activates the destination node $k$ times. In E-step, we first find the expectation for $K_{e}$ as follows.
\begin{align*}
&\sum_{e\in E_{H}}{\sum_{k=1,\cdots ,N_{U}}{p_{e,k}\log (k|N_{U},p_{e})}}\\
&=\sum_{k=1,\cdots ,N_{U}}{p_{e,k}\log P(k|N_{U},p_{e})} \\
&=\sum_{k=1,\cdots ,N_{U}}p_{e,k}\log ({\binom{N_{U}}{k}}{p_{e}}%
^{k}(1-p_{e})^{N_{U}-k}) \\
&= \sum_{k=1,\cdots ,N_{U}}{p_{e,k}\left( \log {\binom{N_{U}}{k}}+k\log
	p_{e}+(N_{U}-k)\log (1-p_{e})\right) }. \\
&
\vspace{-3mm}
\end{align*}

Since $\sum_{k=1,\cdots ,N_{U}}{p_{e,k}k}=E[K_{e}]$ and $\sum_{k=1,\cdots,N_{U}}{p_{e,k}}=1$, the log-likelihood of the first term is further simplified as
\begin{multline*}
\sum_{k=1,\cdots ,N_{U}}p_{e,k}\log {\binom{N_{U}}{k}}+N_{U}\log (1-p_{e})\\+E[K_{e})](\log p_{e}-\log (1-p_{e})).
\end{multline*}%
Afterward, M-step maximizes the Q-function by finding the derivative with regard to $p_{e}$:
\begin{eqnarray*}
	&&\frac{-N_{U}}{1-p_{e}}+E[K_{e})](\frac{1}{p_{e}}+\frac{1}{1-p_{e}})=0 \\
	&&p_{e}=E[K_e]/N_{U}
\end{eqnarray*}%
Therefore, the maximum likelihood estimator $\hat{p}_{e}$ is $\frac{E[K_{e}]}{N_{U}}$, $\hat{\lambda}_{T}(U)$ is $N_{U}$, and $\hat{\lambda}_{A}(e)=E[K_{e}]$.

The problem remaining is to take expectation of the latent variables $%
\{K_{e}\}$ in E-step. Let $\{w_{e,a}\}_{e\in E_{H},a=(\mathtt{v},\tau )\in \mathcal{A}%
}$ be the conditional probability that $v$ is activated by the source set $U$
of $e$ at $\tau$ given $v$ is activated at $\tau$, and let $E_{a}$
denote the set of candidate hyperedges containing every possible $e$
with its source set activated at time $\tau -1$, i.e., $E_{a}=%
\{(\mathtt{u}_{1},\mathtt{u}_{2},\cdots ,\mathtt{u}_{n})\rightarrow v|\forall i=1,\cdots ,n,\mathtt{u}_{i}\in
N(\mathtt{v}_{i}),(\mathtt{u}_{i},\tau -1)\in \mathcal{A}\}$. It's easy to show that given the
estimation of the probability of hyperedges, $w_{e,a}=\frac{\hat{p}_{e}}{%
	1-\prod_{e^{\prime}\in E_{a}}{(1-\hat{p}_{e^{\prime}})}}$, since $%
1-\prod_{e\in E_{a}}{(1-\hat{p}_{e^{\prime}})}$ is the probability for $\mathtt{v}$
to be activated by any hyperedge at time $\tau$. The expectation of $K_{e}$
is $\sum_{a\in A,e\in E_{a}\cap E_{H,n}}{w_{e,a}}$, i.e., the sum of
expectation of each successful activation of $\mathtt{v}$ from hyperedge $e$, and $%
E_{H,n}=\{(\mathtt{u}_1,\mathtt{u}_2,\cdots,\mathtt{u}_n;v)\in E_H\}$ contains all size $n$ hyperedges.

To address the data sparsity problem, we leverage information from similar hyperedges (described later). Therefore, our framework includes an additional step to smooth the results of M-Step. Kernel smoothing is employed in S-Step.

\textbf{In summary, we have the following steps:}

\textbf{E-Step}:
	\begin{align*}
	E[K_{e}]&=\sum_{a\in A,e\in E_{a}\cap E_{H,n}}{w_{e,a}}, \\
	w_{e,a}&=\frac{\hat{p}_{e}}{1-\prod_{e^{\prime}\in E_{a}}{(1-\hat{p}%
			_{e^{\prime}})}}.
	\end{align*}%

\textbf{\textbf{M-Step}:
	\begin{align*}
	p_{e} &= \frac{\sum_{a\in A,e\in E_{a}\cap E_{H,n}}{w_{e,a}}}{N_{U}}, \\
	\lambda_{A}\left( e\right) &= \sum_{a\in A,e\in E_{a}\cap E_{H,n}}{w_{e,a}}%
	, \\
	\lambda_{T}\left( U\right) &= N_{U}.
	\end{align*}%
}

\begin{table*}[t]
	\centering
	\caption{Comparison of precision, recall, and F1 for three models on Douban, Gowalla, Epinions}
	\begin{tabular}{|l||ccc||ccc|ccc|}
		\hline
		Dataset&  \textbf{Douban} &  &  & \textbf{Gowalla} & & & \textbf{Epinions}  & &\\ \hline
		& Precision & Recall & F1-Score & Precision & Recall & F1-Score & Precision & Recall & F1-Score\\ \hline
		GT & 0.420916 & 0.683275 & 0.520927& 0.124253 & 0.435963 & 0.171214 & 0.142565 & 0.403301 & 0.189999  \\
		IC & 0.448542 & 0.838615 & 0.584473 & 0.217694 & 0.579401 & 0.323537 & 0.172924 & 0.799560 & 0.247951 \\
		SIG & 0.869348 & 0.614971 & 0.761101 & 0.553444 & 0.746408 & 0.646652& 0.510118 & 0.775194 & 0.594529\\
		\hline
	\end{tabular}%
	\label{tab:Douban}
\end{table*}

\textbf{S-Step}: To address the data sparsity problem, we leverage information from similar hyperedges (described later). Therefore, in addition to E-Step and M-Step, EMS includes \textbf{S-Step}, which smooths the results of M-Step. Kernel smoothing is employed in S-Step as follows:
\begin{align*}
\hat{\lambda}_{A}\left( e\right) &=\sum_{a\in A,e^{\prime }\in E_{a}\cap
	E_{H,n}}{w_{e^{\prime },a}L_{h}\left( F(e)-F(e^{\prime })\right) } \\
\hat{\lambda}_{T}\left( U\right) &= \sum_{U\subseteq V_{SI}}{%
	N_{U}L_{h}\left( F(U)-F(U^{\prime })\right)}
\end{align*}%
where $L_{h}$ is a kernel function with bandwidth $h$, and $F$ is the mapping
function of hyperedges, i.e., $F(e)$ maps a hyperedge $e$ to a vector. The details of dimension reduction for calculating $F$ to efficiently map hyperedges into Euclidean space are shown in the next subsection.  If the hyperedges $e$ and $e^{\prime}$ are similar, the distance of the vectors
$F(e)$ and $F(e^{\prime})$ is small. Moreover, a kernel function $L_{h}(x)$ is a
positive function symmetric at zero which decreases when $|x|$ increases,
and the bandwidth $h$ controls the extent of auxiliary information taken
from similar hyperedges.\footnote{A symmetric Gaussian kernel function is often used \cite{HjortLfunction96}.} Intuitively, kernel smoothing can identify the correlation of $\hat{p}_{e_{1}}$ with $%
e_{1}=\mathtt{U}_{1}\rightarrow \mathtt{v}_{1}$ and $\hat{p}_{e_{2}}$ with $%
e_{2}=\mathtt{U}_{2}\rightarrow \mathtt{v}_{2}$ for nearby $\mathtt{v}_{1}$ and $\mathtt{v}_{2}$ and similar $\mathtt{U}_{1}$ and $\mathtt{U}_{2}$.

\subsection{Dimension Reduction}
To facilitate the computation of smoothing function in EMS algorithm, we exploit a dimension reduction technique \cite{Tenenbaum00, Yan07} to map a graph into Euclidean space as a set of vectors. Specifically, given $N$ users, let $Z\in \mathbb{R}^{N \times N}$ denote the projection matrix, where $z_i$ is the $i$-th row of $Z$ and is the projection of vertex $v_i$. We derive $Z$ as follows:
\begin{eqnarray*}
&&Z=\arg \min_{Z^{T}IZ=c}{z^{T}Lz}, \\
&&L=D-W,D_{ii}=\sum_{i\neq j}{W_{ij}}\forall i,
\end{eqnarray*}%
\noindent where $D$ is a diagonal matrix, $I$ is the identity matrix, and $L$ is the Laplacian matrix of distance matrix $W \in \mathbb{R}^{N \times N}$. This optimization problem attempts to preserve the distance between nodes. However, the constraint $Z^{T}IZ=c$ restricts that only $c$ columns of $Z$ can be non-zero vector. Therefore, the objective function reduces the dimension of $z_i$ from $N$ to $c$ while maintaining the local structure as much as possible.

Suppose we solve this generalized eigenproblem for first $c$ solutions and the $i$-th component of $j$-th eigenvector $z_j$ is denoted as $z_{j,i}$. The projection of $i$-th node $v_i$ has a $K$-dimensional representation $f(v_{i})=(z_{1,i},z_{2,i},\cdots,,z_{K,i})$. In our paper, we employ one of the most widely used nonlinear dimension reduction technique, ISOMAP \cite{Tenenbaum00} only. The distance matrix $W=-HSH/2$, where $H=I-1/N\overrightarrow{1}\overrightarrow{1}^T$ ($I$ is the identity matrix and $\overrightarrow{1}$ is the vector of all ones) and $S_{ij}$ is distance between two nodes in the graph. For our SIG, we first project the users whose graph is $G$, and then the items whose graph is set to be the complete graph. Other method sharing the above optimization formulation can be used as a substitute without much effort. 

By employing the above graph embedding approaches, we project the graph into Euclidean space while preserving the distance between the nodes locally. Therefore, the customers who are socially near and the similar commodity items can be extracted efficiently. Therefore, after the dimension reduction procedure, each node has a $K$-dimensional representation. Therefore, each hyperedge $e$ of size $n$ comprising of $n$ source nodes and $1$ destination node can be mapped to a vector on the space $\mathbb{R}^{(n+1)K}$.

\section{Evaluations}
\label{sec:experiment}
We conduct comprehensive experiments to evaluate the proposed SIG model, learning framework and seed selection algorithms. In Section \ref{expsetting}, we discuss the data preparation for our evaluation. In Section \ref{subsec:model}, we compare the predictive power of the SIG model against two baseline models: i) independent
cascade (IC) model learned by implementing \cite{Saito08KES} and ii) the generalized threshold (GT) model learned by \cite{Goyal11ICDM}.\footnote{\url{http://people.cs.ubc.ca/~welu/downloads.html}} In addition, we evaluate the learning framework based on the proposed EM and EMS algorithms. Next, in Section \ref{subsec:effectiveness}, we evaluate the proposed HAG algorithm for SIMP in comparison to a number of baseline strategies, including random, single node selection, social, and item approaches. Finally, in Section \ref{subsec:efficiency}, we evaluate alternative approaches for diffusion processing, which is essential and critical for HAG, based on SIG-index, Monte Carlo simulations and sorting enhancement.

\subsection{Data Preparation}
\label{expsetting}
Here, we conduct comprehensive experiments using three real datasets to evaluate the proposed ideas and algorithms. The first dataset comes from Douban \cite{DoubanData}, a social networking website allowing users to share music and books with friends. Dataset \textit{Douban} contains $5,520,243$ users and $86,343,003$ friendship links, together with $7,545,432$ (user, music) and $14,050,265$ (user, bookmark) pairs, representing the music noted and the bookmarks noted by each user, respectively. We treat those (user, music) and (user, bookmark) pairs as purchase actions. In addition to \textit{Douban}, we adopt two public datasets, i.e., \textit{Gowalla} and \textit{Epinions}. Dataset \textit{Gowalla} contains $196,591$ users, $950,327$ links, and $6,442,890$ check-ins \cite{Cho11KDD}. Dataset \textit{Epinions} contains $22,166$
users, $335,813$ links, $27$ categories of items, and $922,267$ ratings with timestamp \cite{Dataset3}. Notice that we do not have data directly reflecting item inferences in online stores, so we use the purchase logs for learning and evaluations. The experiments are implemented in an HP DL580 server with 4 Intel Xeon E7-4870 2.4 GHz CPUs and 1 TB RAM.

We split all three datasets into 5-fold, choose one subsample as training data, and test the models on the remaining subsamples. Specifically, we ignore the cases when the user and her friends did not buy anything. Finally, to evaluate the effectiveness of the proposed SIG model (and the learning approaches), we obtain the purchase actions in the following cases as the ground truth: 1) item inference - a user buys some items within a short period of time; and 2) a user buys an item after at least one of her friends bought the item. The considered periods of item inference and social influence are set differently according to \cite{itemperiod} and \cite{socialperiod}, respectively. It is worth noting that only the hyperedges with the probability larger than a threshold parameter $\theta$ are considered. We empirically tune $\theta$ to obtain the default setting based on optimal F1-Score. Similarly, the threshold parameter $\theta$ for the GT model is obtained empirically.
The reported precision, recall, and F1 are the average of these tests. Since both SIGs and the independent cascade model require successive data, we split the datasets into continuous subsamples.


\subsection{Model Evaluation}

\label{subsec:model}

%

Tables \ref{tab:Douban} present the precision, recall, and F1 of SIG, IC and GT on \textit{Douban}, \textit{Gowalla}, and \textit{Epinions}. All three models predict most accurately on \textit{Douban} due to the large sample size. The SIG model significantly outperforms the other two models on all three datasets, because it takes into account both effects of social influence and item inference, while the baseline models only consider the social influence. The difference of F1 score between SIG and baselines is more significant on \textit{Douban}, because it contains more items. Thus, item influence plays a more important role. Also, when the user size increases, SIG is able to extract more social influence information leading to better performance than the baselines. The offline training time is 1.68, 1.28, and 4.05 hours on \textit{Epinions}, \textit{Gowalla}, \textit{Douban}, respectively.
\begin{figure}[t]
	\centering
	\subfigure[Precision] {\includegraphics[width=1.3 in]{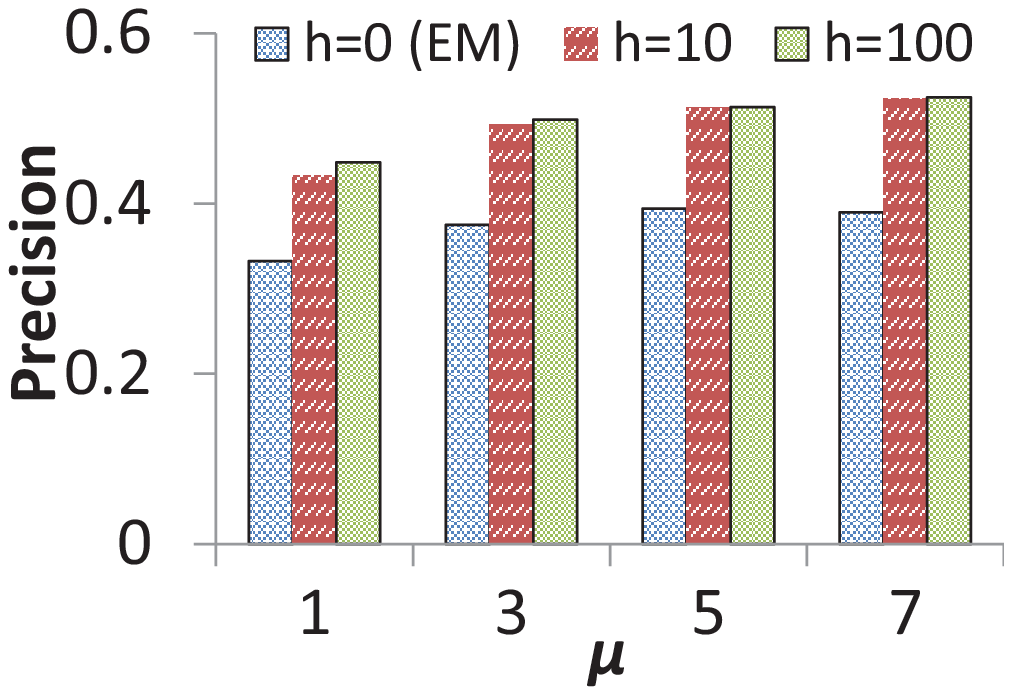}}
	\subfigure[F1-Score]{\includegraphics[width=1.3 in]{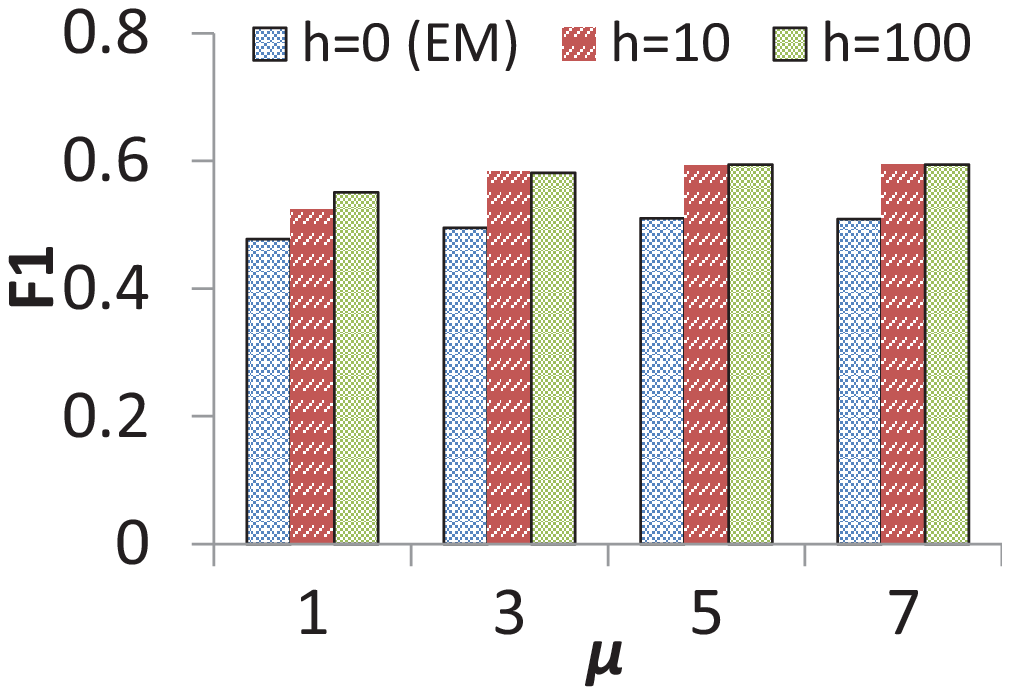}} \vspace{-5pt}\\
	\vspace{-8pt}
	\caption{Comparisons of precision and F1 in various $\mu$ and $h$ on \textit{Epinions}.}
	\label{fig:PrecF1}
\end{figure}

To evaluate the approaches adopted to learn the activation probabilities of hyperedges for construction of SIG, Fig. \ref{fig:PrecF1} compares the prevision and F1 of EMS and EM algorithms on Epinions (results on other datasets are consistent and thus not shown due to space limitation). Note that EM is a special case of EMS (with the smoothing parameter $h=0$, i.e., no similar hyperedge used for smoothing). EMS outperforms EM on both precision and F1-score in all settings of $\mu$ (the maximum size of hyperedges) and $h$ tested. Moreover, the precision and F1-score both increases with $h$ as a larger $h$ overcomes data sparsity significantly. As $\mu$ increases, more combinations of social influence and item inference can be captured. Therefore, the experiments show that a higher $\mu$ improves F1-score without degrades the precision. It manifests that the learned hyperedges are effective for predicting triggered purchases.


\subsection{Algorithm Effectiveness and Efficiency}

\label{subsec:effectiveness}

We evaluate HAG proposed for SIMP, by selecting top 10
items as the marketing items to measure their total
adoption, in comparison with a number of baselines:
1) \emph{Random approach (RAN)}. It randomly selects $k$ nodes as seeds.
Note that the reported values are the average of 50 random seed
sets.
2) \emph{Single node selection approach (SNS)}. It selects a node with
the largest increment of the total adoption in each iteration, until $k$
seeds are selected, which is widely employed in conventional seed
selection problem \cite{IC1,Chen10KDD,Kempe03KDD}. 3) \emph{Social
	approach (SOC)}. It only considers the social influence in selecting the $k$
seeds. The hyperedges with nodes from different products are eliminated
in the seed selection process, but they are restored for
calculation of the final total adoption. 4) \emph{Item approach (IOC)}. The
seed set is the same as HAG, but the prediction is based on item inference only.
For each seed set selected by the above approaches, the diffusion process is simulated 300 times.
We report the average in-degree of nodes learned from the three datasets in the following: Douban is 39.56; Gowalla is 9.90; Epinions is 14.04. In this section, we evaluate HAG by varying the number of seeds (i.e., $k$) using two metrics: 1)
total adoption, and 2) running time.

To understand the effectiveness, we first compared all those approaches with the optimal solution (denoted as OPT) in a small subgraph sampled, \emph{Sample}, from the SIG of \emph{Douban} with 50 nodes and 58 hyperedges.  Figures~\ref{fig:k_opt} (a) displays the total adoption obtained by different approaches. As shown, HAG performs much better than the baselines and achieves comparable total adoption with OPT (the difference decreases with increased k). Note that OPT is not scalable as shown in Figures~\ref{fig:k_opt} (b) since it needs to examine all combination with $k$ nodes.
Also, OPT takes more than 1 day for selecting 6 seeds in \emph{Sample}.
Thus, for the rest of experiments, we exclude OPT.


Figures~\ref{fig:k_spread} (a)-(c) compare the total adoptions of different
approaches in the SIG learnt from real networks. They all grow as $k$ increases, since a larger
$k$ increases the chance for seeds to influence others to adopt
items. Figure~\ref{fig:k_spread} (a)-(c) manifest that HAG outperforms all the other baselines for any $k$ in SIG model.
Among them, SOC fails to find good solutions since item inference is not
examined during seed selection. IOC performs poorly without considering social influence.
SNS only includes one seed at a time
without considering the combination of nodes that may activate many other nodes via hyperedges.
\begin{figure}[t]
	\centering
	\subfigure[\emph{Sample}] {\includegraphics[width=1.3 in]{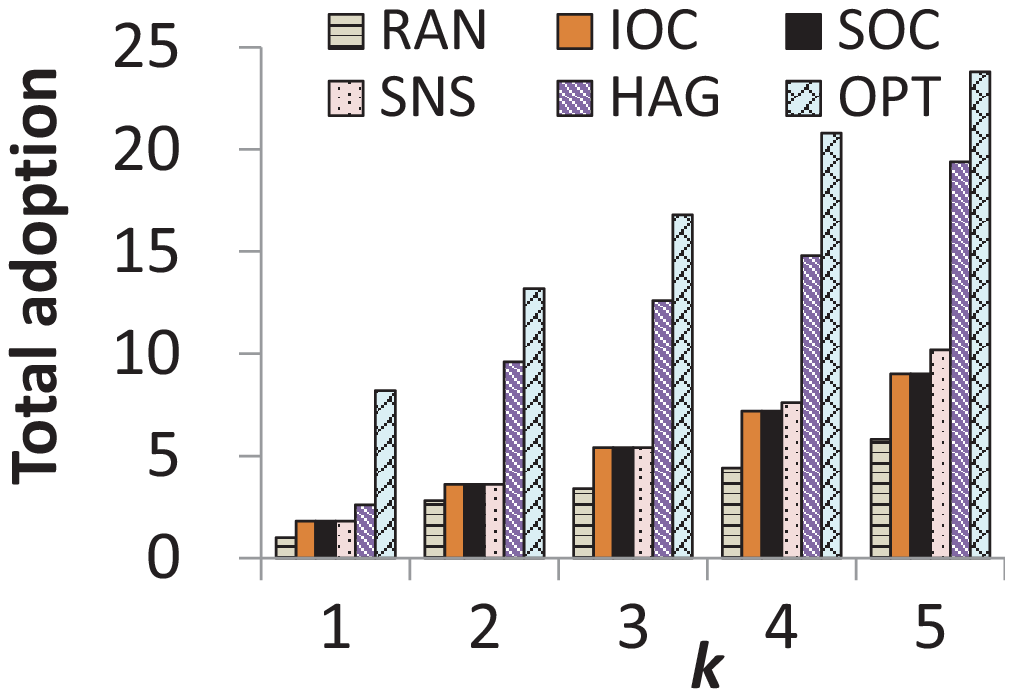}}
	\subfigure[Time]{\includegraphics[width=1.3 in]{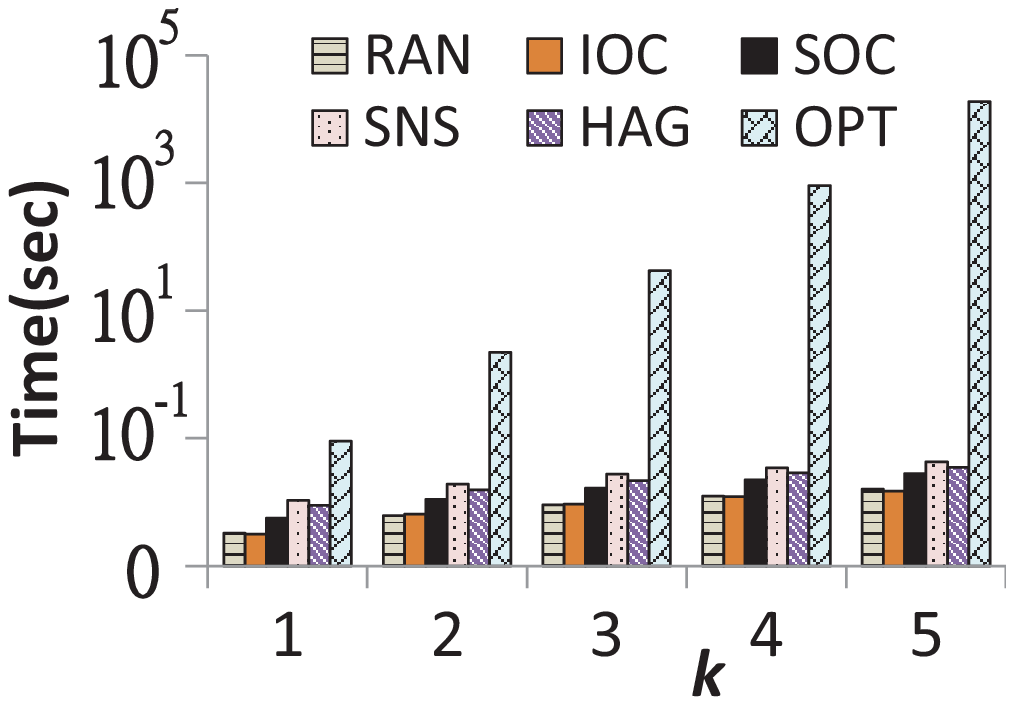}} \vspace{-5pt}\\
	\vspace{-8pt}
	\caption{Total adopting and running time of \emph{Sample} in various $k$}
	\label{fig:k_opt}
\end{figure}
\begin{figure}[t]
	\centering
	\subfigure[$\alpha_{G_{SI}}$(\emph{Douban})] {\includegraphics[width=1.3 in]{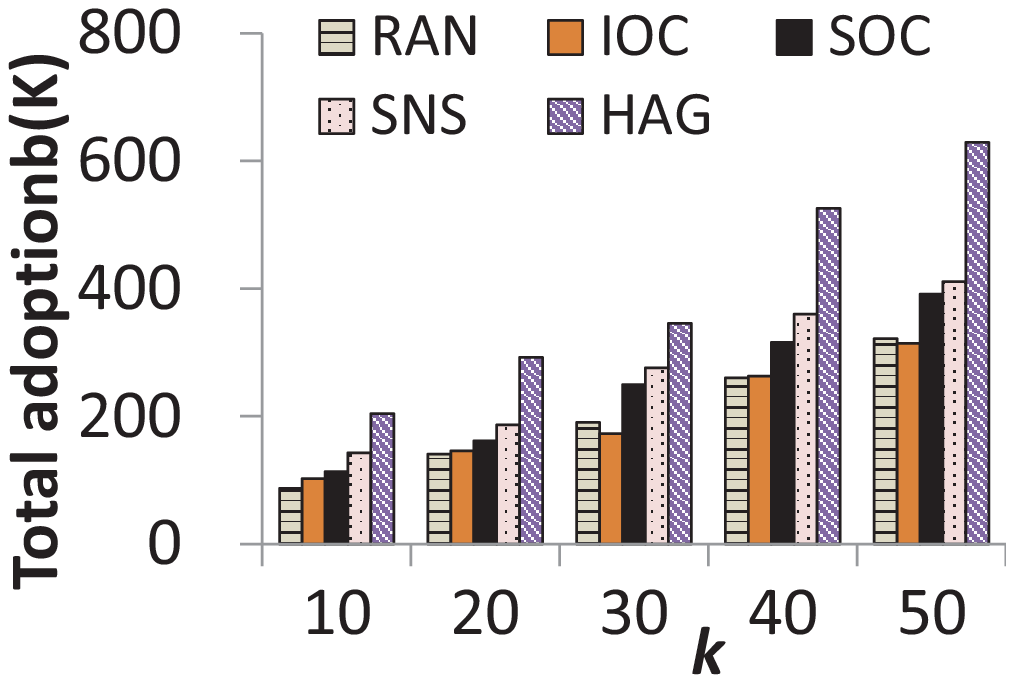}}
	\subfigure[$\alpha_{G_{SI}}$(\emph{Gowalla})] {\includegraphics[width=1.3 in]{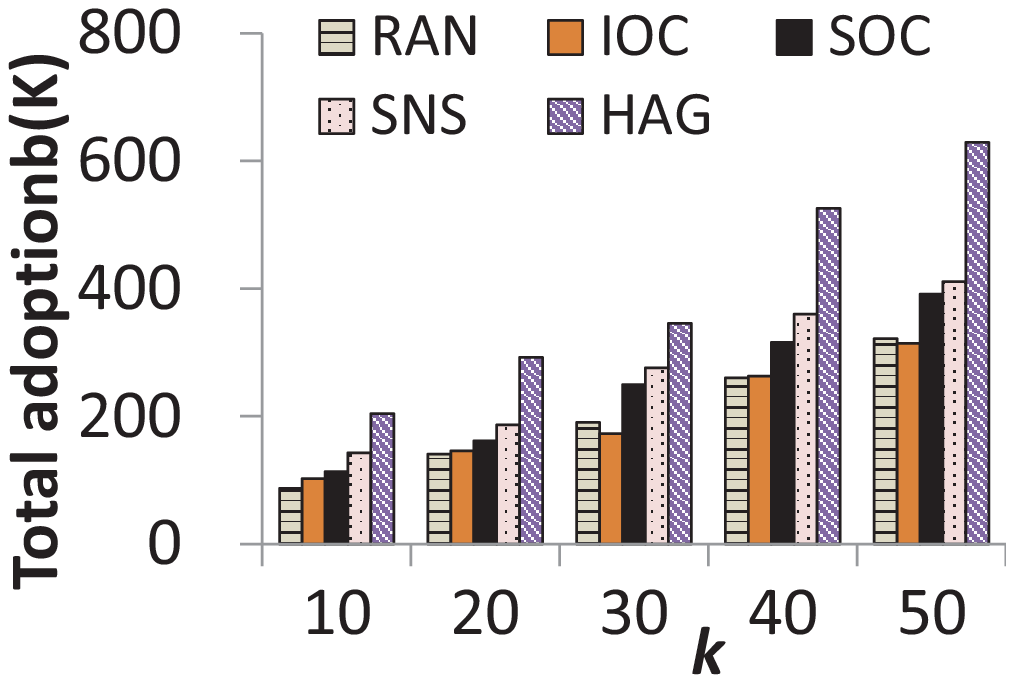}} \vspace{-5pt}\\
	\subfigure[$\alpha_{G_{SI}}$(\emph{Epinions})]{\includegraphics[width=1.3 in]{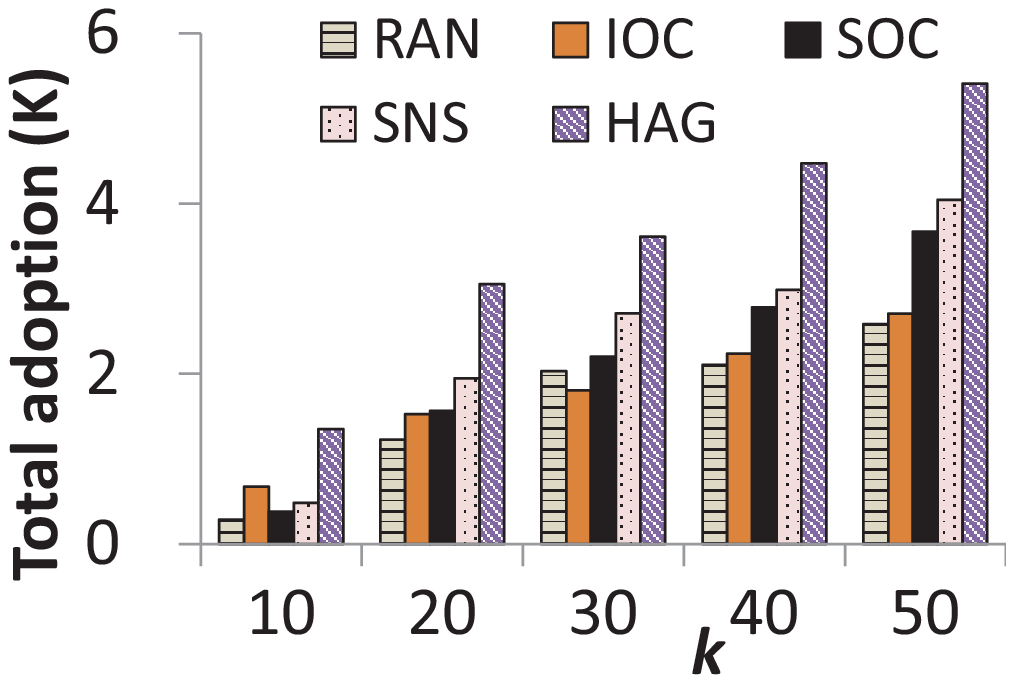}}
	\subfigure[Time (\emph{Douban})]{\includegraphics[width=1.3 in]{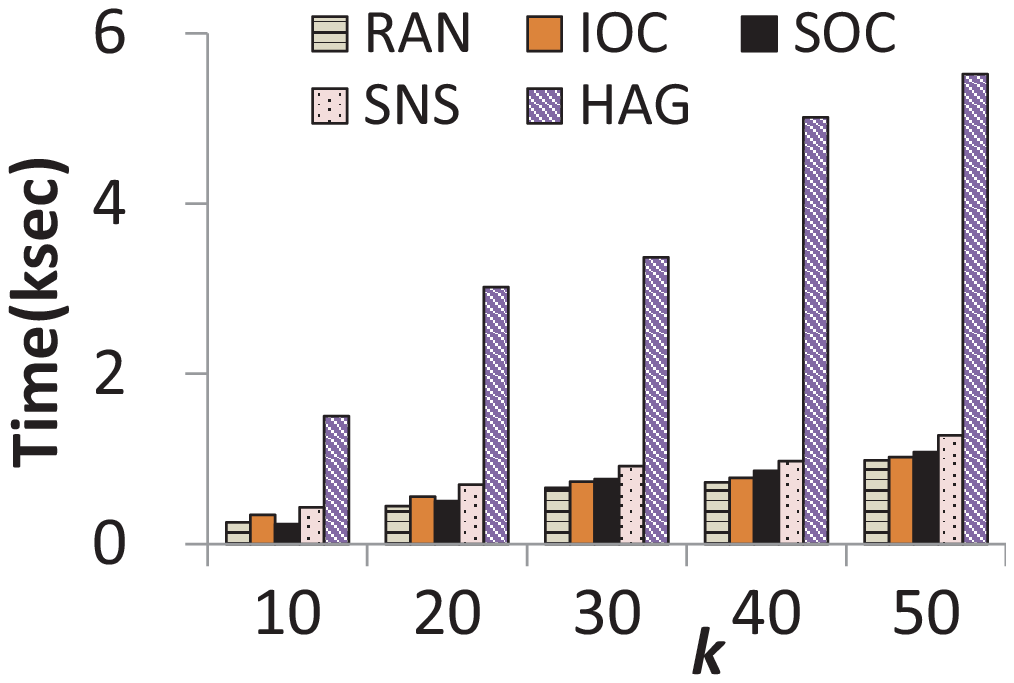}} \vspace{-5pt}\\
	\vspace{-8pt}
	\caption{Total adoption $\alpha_{G_{SI}}$ and running time in various $k$}
	\label{fig:k_spread}
\end{figure}



Figure~\ref{fig:k_spread} (d) reports the running time of
those approaches. Note that the trends upon \emph{Gowalla} and \emph{Epinions} are similar with \emph{Douban}. Thus we only report the running time of \emph{Douban} due to the space constraint. Taking the source combinations into account, HAG examines source combinations of hyperedges in $E_H$ and obtains a better solution by spending more time since the number of hyperedges is often much higher than the number of nodes.\footnote{To further reduce the running time, we eliminate the source combinations with little gain in previous iterations.} 

\subsection{Online Diffusion Processing\label{subsec:efficiency}}

\begin{figure}[t]
	\subfigure[Douban] {\includegraphics[width=1.05 in]{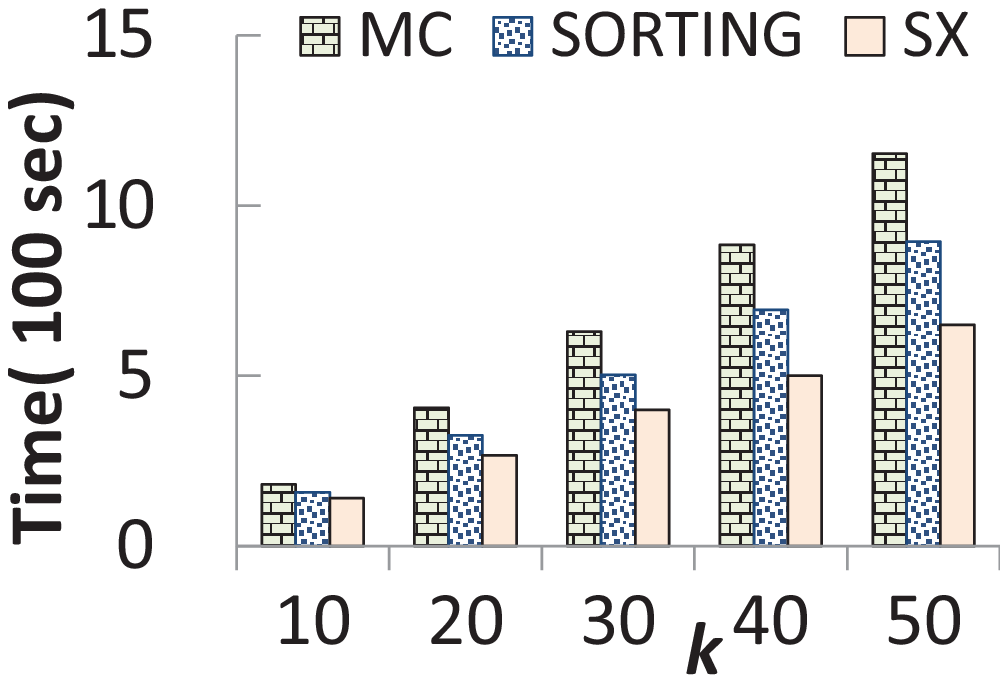}}
	\subfigure[Gowalla]{\includegraphics[width=1.05 in]{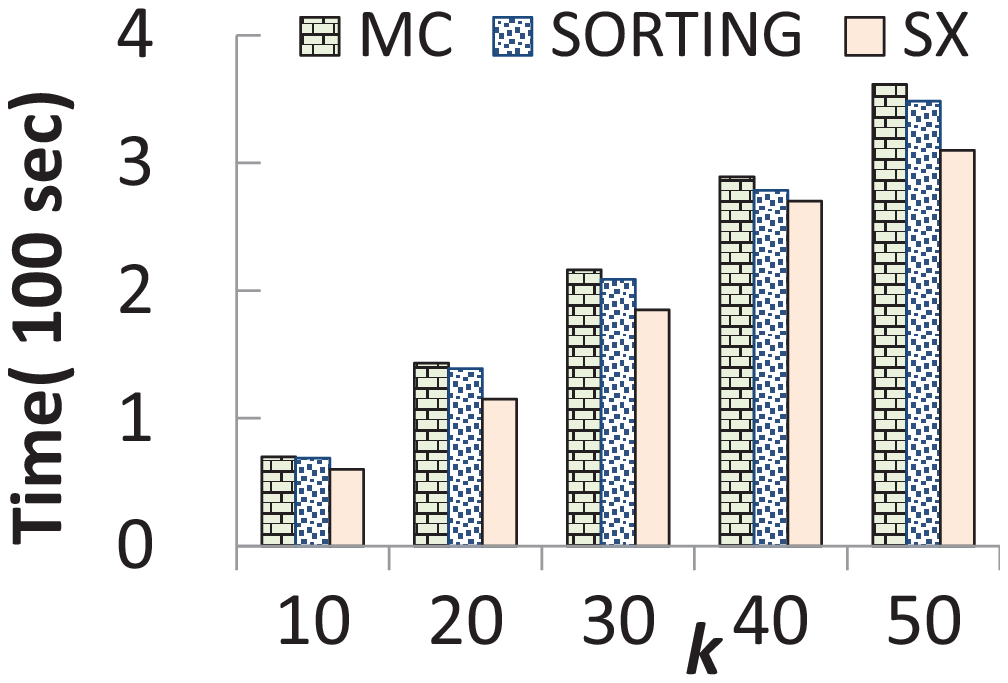}}
	\subfigure[Epinions]{\includegraphics[width=1.05 in]{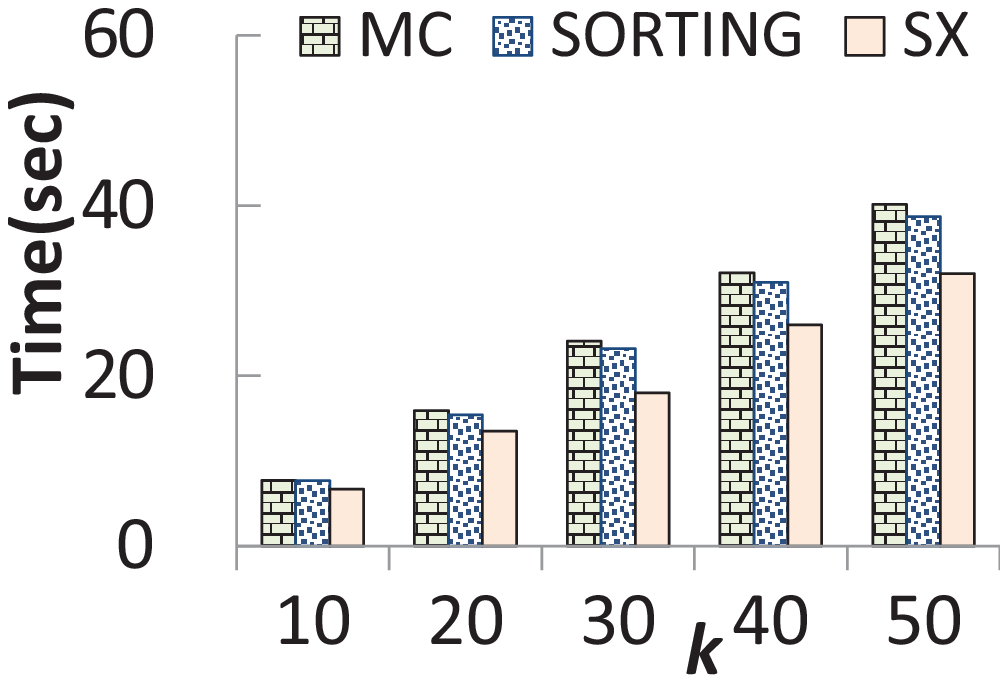}}
	\vspace{-8pt}
	\caption{Running time of different simulation methods}
	\label{fig:k_lattice}
\end{figure}

Diffusion processing is an essential operation in HAG. We evaluate the efficiency of diffusion processing based on SIG-index (denoted as SX), in terms of the running time,
in comparison with that based on the original Monte Carlo simulation (denoted as MC) and
the sorting enhancement (denoted as SORTING), which accesses the hyperedges in
descending order of their weights.
Figure~\ref{fig:k_lattice} plots the running time of SX, SORTING, and
MC under various $k$ using the \textit{Douban}, \textit{Gowalla}, and \textit{Epinions}.
For each $k$, the average running times of 50 randomly selected
seed sets for SX, SORTING, and MC, are reported. The diffusion process is simulated 300 times for each seed set.
As Figure~\ref{fig:k_lattice} depicts, the running time for all the three
approaches grows as $k$ increases, because a larger number of seeds
tends to increas the chance for other nodes to be activated. Thus, it
needs more time to diffuse. Notice that SX takes much less time than
SORTING and MC, because SX avoids accessing hyperedges with no source nodes newly activated while calculating the activation probability. Moreover, the SIG-index is updated dynamically according to the activated nodes in diffusion process.
Also note that the improvement by MC over SORTING in \textit{Douban} is
more significant than that in \textit{Gowalla} and \textit{Epinions}, because the average
in-degree of nodes is much larger in \textit{Douban}.
Thus, activating a destination at an early stage can effectively avoid processing many hyperedges later.


\section{Conclusion}

\label{sec:conclusion}

In this paper, we argue that existing techniques for item inference
recommendation and seed selection need to jointly take social influence and item inference into consideration. We propose Social Item Graph
(SIG) for capturing purchase actions and predicting potential purchase
actions. We propose an effective machine learning approach to construct a SIG from
purchase action logs and learn hyperedge weights. We also develop efficient
algorithms to solve the new and challenging Social Item Maximization Problem
(SIMP) that effectively select seeds for marketing. Experimental results demonstrate the superiority of the SIG model over existing models and the effectiveness and efficiency of the proposed algorithms for processing SIMP. We also plan to further accelerate the diffusion process by indexing additional information on SIG-index.



\bibliographystyle{abbrv}
\bibliography{reference}

\end{document}